\definecolor{OliveGreen}{rgb}{0,0.6,0}
\definecolor{navyblue}{rgb}{0,0.352,0.674}
\newcommand{\BE}{\begin{equation}}
\newcommand{\EE}{\end{equation}}
\newcommand{\BEN}{\begin{equation}\nonumber}
\newcommand{\BA}{\begin{array}}
\newcommand{\EA}{\end{array}}
\newcommand{\BPA}{\begin{pmatrix}\begin{array}}
\newcommand{\EAP}{\end{array}\end{pmatrix}}
\newcommand{\BI}{\begin{itemize}}
\newcommand{\EI}{\end{itemize}}
\newcommand{\blue}{\textcolor{blue}}
\newcommand{\MC}{\mathcal}
\newcommand{\BEQ}{\begin{equation}}
\newcommand{\EEQ}{\end{equation}}
\newcommand{\BEA}{\begin{eqnarray}}
\newcommand{\EEA}{\end{eqnarray}}
\newcommand{\BIT}{\begin{itemize}}
\newcommand{\EIT}{\end{itemize}}
\begin{document}

\title{An efficient parallel block coordinate descent
algorithm for large-scale precision matrix estimation using graphics processing units
}


\author{Young-Geun Choi        \and
        Seunghwan Lee \and Donghyeon Yu 
}


\institute{Young-Geun Choi \at 
              Department of Statistics, Sookmyung Women's University, Seoul, Korea.
              \and
              Seunghwan Lee \at
              Department of Statistics, Inha University, Incheon, Korea. 
              \and
              Donghyeon Yu \at
              Department of Statistics, Inha University, Incheon, Korea. \\
              \email{dyu@inha.ac.kr}           
}

\date{Received: date / Accepted: date}

\maketitle

\begin{abstract}
Large-scale sparse precision matrix estimation has attracted wide interest from the statistics community.
The convex partial correlation selection method (CONCORD) developed by Khare et al. (2015)
has recently been credited with some theoretical properties for estimating sparse precision matrices.
The CONCORD obtains its solution by a coordinate descent algorithm (CONCORD-CD) based on  the convexity of the objective function. 
However, since a coordinate-wise update in CONCORD-CD is \emph{inherently serial}, a scale-up is nontrivial.
In this paper, we propose a novel parallelization of CONCORD-CD, namely, CONCORD-PCD. 
CONCORD-PCD partitions the off-diagonal elements into several groups and updates each group simultaneously without harming the computational convergence of CONCORD-CD.
We guarantee this by employing the notion of edge coloring in graph theory. Specifically, we establish a nontrivial correspondence between \emph{scheduling the updates} of the off-diagonal elements in CONCORD-CD and \emph{coloring the edges} of a complete graph.
It turns out that CONCORD-PCD simultanoeusly updates off-diagonal elements in which the associated edges are colorable with the same color.
As a result, the number of steps required for updating off-diagonal elements reduces from $p(p-1)/2$ to $p-1$ (for even $p$) or $p$ (for odd $p$), where $p$ denotes the number of variables.
We prove that the number of such steps is irreducible
In addition, CONCORD-PCD is tailored to single-instruction multiple-data (SIMD) parallelism. A numerical study shows that the SIMD-parallelized PCD algorithm implemented in graphics processing units (GPUs) boosts the CONCORD-CD algorithm multiple times. The method is available in the R package \texttt{pcdconcord}.
\keywords{CONCORD  \and edge coloring \and parallel coordinate descent 
\and  graphical model \and GPU-parallel computation  }
\end{abstract}

\section{Introduction}\label{sec:intro}


The estimation of a precision matrix, the inverse of a covariance matrix, is essential for many downstream data analyses and has wide application in social science, economics, and physics, among others. 
Directly estimating the true precision matrix under some sparsity conditions is a popular choice where the number of variables ($p$) is relatively large compared to the sample size ($n$). 
Examples include
likelihood-based \citep{Yuan2007,Friedman2008,Witten2011,Mazumder2012},
regression-based  \citep{Meinshausen2006,Peng2009,Sun2013,Khare2015}
and constrained $\ell_1$-minimization approaches \citep{Cai2011,Cai2016b,Pang2014}.
The CONvex partial CORrelation selection methoD (CONCORD)
proposed by \cite{Khare2015} is a variant of a regression approach called SPACE \citep{Peng2009}.
It has good theoretical properties: the objective function is convex and the estimator is statistically consistent (provided that the true counterpart is sparse) while satisfying the symmetry requirement.

Scalability of CONCORD and any other precision matrix estimation methods is a key challenge for application. Roughly speaking, they require at least $O(np^2)$ or $O(p^3)$ of float-point operations (``flops''). As $p$ increases, the computation time increases dramatically.
For example, a coordinate descent algorithm for the CONCORD (CONCORD-CD)
 proposed in \cite{Khare2015} requires  3440.95 (sec) for $n=2000$ and $p = 5000$
in our numerical study.  Detailed settings are introduced in Section \ref{sec:simulation}. 
Applications to high-dimensional data, such as gene regulatory analysis and portfolio optimization, 
face this computational challenge.

This study aims to fill this scalability gap by proposing a novel parallelization of the CONCORD-CD algorithm, namely, CONCORD-PCD algorithm. 
A high-level motivation of the algorithm is as follows.
Recall that the CONCORD-CD runs \emph{consecutive updates}, because the cyclic coordinate descent algorithm minimizes a target objective function with respect to one coordinate direction
at each update while the other
coordinates are fixed. Thus, each   update   requires the result of the previous update, which 
%
is essential to guarantee convergence. 
As a result, the CD algorithm for CONCORD (i.e., CONCORD-CD)  consumes 
$p(p+1)/2$ serial updates
per iteration to update the entire precision matrix. 
We observe that a careful reordering of the elements to be updated allows some consecutive updates
to  run \emph{simultaneously} even as convergence guarantee is preserved.
This is because every elements corresponding to the carefully chosen set of consecutive updates
are \emph{independent} in a sense that
an update for each element does not
require the results of the 
updates for the other elements in the given set. 

%
We systematize such observation by the lens of the \emph{edge coloring}, a well-known concept in graph theory. Edge coloring is an assignment of \emph{colors} to the edges of a graph in a way that any pair of  edges sharing at least one vertices has different colors.
Specifically, we build a conceptual bridge between  \emph{updating an element} of the off-diagonal elements in CONCORD-CD and \emph{coloring the associate edges} of a complete graph.
Then, we prove that a set of the off-digonal elements can be updated simultaneously in parallel if the associated  edges are colorable with the same color. 
This theorem enables us to  employ the so-called circle method, a scheduling principle to color a complete graph with the minimal number of colors (i.e., parallel steps).
Consequently, the consecutive steps required to update all the off-diagonal elements reduce to $p-1$ ($p$)  when $p$ is even (odd), where each step runs a simultaneous update of $p/2$ ($(p-1)/2$) elements. After then, the entire diagonal elements can be updated by one additional step.

We also provide the details to implement the CONCORD-PCD algorithm tailed for graphics processing unit (GPU) devices, which is also available in R Package \texttt{pcdconcord} 
at \url{http://sites.google.com/view/seunghwan-lee}.  
GPU devices receive  growing attention in statistical computing since GPU has many light-weight cores that can enormously reduce computation time when the given operations are adequate for
single-instruction multiple-data (SIMD) parallelism.
SIMD parallelism refers to a processing method where multiple processing units perform the same operation on multiple data points. A typical example of SIMD is summing two vectors where the sum of each element is conducted by one sub-processing unit.
We note that the CONCORD-PCD algorithm is well-suited for SIMD parallelism.
Our numerical results show that the GPU-parallelized CONCORD-PCD algorithm boosts the original CONCORD-CD algorithm implemented in the CPU multiple times.

Parallelization of coordinate descent algorithms have been considered in the literature. 
\cite{Richtarik2016a} and \cite{Bradley2011} proposed parallelized coordinate descent algorithms for regularized convex loss functions.
In particular, \cite{Richtarik2016a} randomly partitioned the coordinates and distributed the partitioned subprograms. 
\cite{Bradley2011} updated  the iterative solution by the direction of the average of increments on each axis.
It is worth noting that both studies required an appropriate 
learning rate (a constant multiplied by the descent direction) to guarantee convergence to the optima.
 In practice,  the optimal learning rate is unknown and is set sufficiently small,  which results in a large number of iterations for convergence. In contrast, our algorithm does not involve selection of the learning rate to guarantee convergence.
The literature of sparse precision matrix estimation
has considered the  parallelization  of the likelihood-based and constrained $\ell_1$-minimization approaches \citep{Hsieh2013,Hsieh2014,Wang2013d}. To the best of our knowledge, 
it has devoted much less attention to  the regression-based approach, including CONCORD.

The remainder of this paper is organized as follows. In Section \ref{sec:prelim}, we briefly review the CONCORD-CD algorithm as well as key concepts in graph theory, focusing on the edge coloring. In Section \ref{sec:method}, we provide the details of the CONCORD-PCD algorithm. In Section \ref{sec:theory}, we prove the convergence of  the CONCORD-PCD algorithm by leveraging edge coloring. In Section \ref{sec:simulation}, we demonstrate the computational gain of the CONCORD-PCD algorithm with extensive numerical studies. Finally, we conclude the paper in Section \ref{sec:final}.

\section{Preliminaries}\label{sec:prelim}

\subsection{CONCORD: the objective function and coordinate descent algorithm}

CONCORD (\citealp{Khare2015}) is a regression-based pseudo-likelihood method for sparse precision matrix estimation. 
The CONCORD estimator is given by  a minimizer  of the following convex objective function:
\begin{equation} \label{eqn:concord}
L(\Omega;\lambda)= -\sum_{i=1}^p n\log \omega_{ii} + \frac{1}{2}\sum_{i=1}^p \sum_{k=1}^n \Big( \omega_{ii} X_{ki} + \sum_{j\neq i} \omega_{ij}  X_{kj} \Big)^2 + \lambda
\sum_{i<j} |\omega_{ij}|,
\end{equation}
where $\Omega = (w_{ij})_{1 \leq i,j \leq p}$ is a precision matrix term, ${\bf X} = (X_{ki})_{1 \leq k \leq n, 1 \leq i \leq p}$ is the given data matrix (assumed to be centered columnwise), and $\lambda > 0$.
The consistency of the solution was proved when the true counterpart is sparse. 

The CONCORD-CD algorithm proposed in the paper cyclically minimizes \eqref{eqn:concord} with respect to each element.
We briefly review the algorithm for completeness. 
With a slight abuse of notation, let $(\hat{\omega}_{ij})$ be the current update of the algorithm.
First, the $p$ diagonal elements are updated by
\begin{equation} \label{eqn:diag}
\hat{\omega}_{ii}^{\rm new}
\gets \frac{- \sum_{j\neq i} \hat{\omega}_{ij} T_{ij} + \sqrt{ \big(\sum_{j\neq i} \hat{\omega}_{ij} T_{ij} \big)^2 + 4 n T_{ii}}}{2 T_{ii}}.
\end{equation}
Second, the $p(p-1)/2$ off-diagonal elements are updated by 
\begin{equation} \label{eqn:offdiag}
\hat{\omega}_{ij}^{\rm new}
\gets \frac{{\rm Soft}_{\lambda}(-\sum_{j' \neq j} \hat{\omega}_{ij'} T_{jj'}
- \sum_{i' \neq i} \hat{\omega}_{i'j} T_{ii'} )  }{T_{ii} + T_{jj}},
\end{equation}
where $T_{ij}$ is $(i,j)$th element of ${\bf X}^T {\bf X}$,
${\rm Soft}_\tau (x) = {\rm sign}(x) (|x| - \tau)_+$,
and  $(x)_+ = \max (0, x)$.

Note that each element is updated consecutively; that is, once an element is updated, it is used as input in the right-hand sides of \eqref{eqn:diag} and \eqref{eqn:offdiag}.
Thus, the CONCORD-CD algorithm appears to be inherently serial.
In Section \ref{sec:method}, we propose partitioning of the  updating equations  
for the off-diagonal updates \eqref{eqn:offdiag} such that  each partitioned group of 
updating equations  can run simultaneously in parallel. In Section \ref{sec:theory}, we prove that the convergence guarantee is preserved. Our claim will leverage the edge coloring described below.

\subsection{Undirected graph and edge coloring}

We briefly review key concepts of the edge coloring in graph theory.  See  \cite{Nakano1995} and \cite{Formanowicz2012} for comprehensive reviews.

A (simple undirected) \emph{graph} $\MC{G}$ is defined by an ordered pair of sets of \emph{nodes} and \emph{edges}, namely, $\MC{G} = \MC{G}(V, E)$. $V$ is a set of nodes (also called vertices), typically representing variables, say, $V = \{ 1, \ldots, p \}$. $E$ is a set of edges that are unordered pairs of nodes, $E \subseteq  \{ \{i, j\} ~|~ (i,j) \in V \times V, \, i \neq j \}$. 
For simplification, we denote an edge by $ij \in E$ with a slight abuse of notation. We say that the pair $i,j \in V$ is \emph{connected} if $ij \in E$.
One example of a graph is a complete graph with $p$ vertices, say, $\MC{K}_p$, in which every pair of  nodes is connected. In other words, there are $p(p-1)/2$ of edges in $\MC{K}_p$.

\emph{Edge coloring} is defined as an assignment of \emph{colors} to the edges of a graph such that
any pair of  \emph{adjacent} edges (edges sharing at least one vertices) is colored with different colors.
%
Coloring all edges with mutually distinct colors, say, $1, \ldots, K$,
where $K$ is a number of edges in $\MC{G}(V,E)$, is a typical example of edge coloring. 
The central interest is to minimize the number of colors, $K$. 
The following theorem, a special case of Baranyai's Theorem, mathematically establishes  optimal edge coloring for complete graphs.
\begin{theorem}[Baranyai's Theorem]\label{thm:edgeColoring}
Suppose that $\MC{K}_p$ is an undirected complete graph with $p$ vertices. 
the minimum number of colors that can edge-color $\MC{K}_p$ is $p-1$ (if $p$ is even) or $p$ (if $p$ is odd).
\end{theorem}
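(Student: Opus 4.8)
The plan is to prove the two directions separately: that $p-1$ colors (for even $p$) or $p$ colors (for odd $p$) always suffice, and that this many colors are necessary. The lower bound is a counting argument and the upper bound is an explicit construction (the ``circle method,'' i.e.\ an explicit $1$-factorization), so the real content lives in the upper bound.

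\textbf{Lower bound.} Observe that in any proper edge coloring each color class is a matching of $\MC{K}_p$, and a matching on $p$ vertices contains at most $\lfloor p/2 \rfloor$ edges. Since $\MC{K}_p$ has $p(p-1)/2$ edges, any proper edge coloring uses at least $\big(p(p-1)/2\big)\big/\lfloor p/2\rfloor$ colors; this equals $p-1$ when $p$ is even and $p$ when $p$ is odd. (For even $p$ one may instead simply note that the $p-1$ edges meeting a fixed vertex must all receive distinct colors.)

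\textbf{Upper bound, $p$ even.} I would exhibit a partition of $E(\MC{K}_p)$ into $p-1$ perfect matchings. Label the vertices $\infty, 0, 1, \ldots, p-2$ and identify $\{0,\ldots,p-2\}$ with $\integers_{p-1}$. For each color $c \in \integers_{p-1}$ set
\[
M_c = \big\{\{\infty, c\}\big\} \;\cup\; \big\{\, \{c-i,\, c+i\} \;:\; i = 1, \ldots, (p-2)/2 \,\big\},
\]
all indices read mod $p-1$. One checks that each $M_c$ has $p/2$ edges and covers every vertex, hence is a perfect matching, and that the $M_c$ are pairwise edge-disjoint; thus they partition $E(\MC{K}_p)$, and coloring every edge of $M_c$ with color $c$ is a proper edge coloring using exactly $p-1$ colors.

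\textbf{Upper bound, $p$ odd.} I would reduce to the even case: adjoin one auxiliary vertex to obtain $\MC{K}_{p+1}$ with $p+1$ even, apply the construction above to get a partition of $E(\MC{K}_{p+1})$ into $p$ perfect matchings, and then delete the auxiliary vertex. Each perfect matching loses exactly its one edge at the deleted vertex and becomes a (near-perfect) matching of $\MC{K}_p$; the resulting $p$ matchings still partition $E(\MC{K}_p)$, giving a proper edge coloring with $p$ colors. Combined with the lower bounds, this proves the theorem.

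\textbf{Main obstacle.} The only step requiring care is verifying, in the circle-method construction, that the $M_c$ are matchings and are mutually disjoint. This is an elementary computation in $\integers_{p-1}$: for an edge $\{a,b\}$ with $a,b \neq \infty$, membership $\{a,b\} \in M_c$ is equivalent to $2c \equiv a+b \pmod{p-1}$, which has a \emph{unique} solution $c$ because $p-1$ is odd (as $p$ is even), so $2$ is invertible mod $p-1$; while $\{\infty, c\}$ lies only in $M_c$. One also checks $c-i \neq c+i$ for $1 \le i \le (p-2)/2$ since $0 < 2i < p-1$. I expect this bookkeeping to be the fiddliest part of the argument, but it is routine; everything else follows from the degree/matching counting above.
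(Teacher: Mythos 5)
Your proof is correct. Note first that the paper itself offers no proof of this statement: it is quoted as a known special case of Baranyai's theorem, with the relevant references given, so there is no in-paper argument to compare against line by line. That said, your upper-bound construction is precisely the ``circle method'' that the paper describes informally in Section \ref{subsec:circle} (the rotating round-robin table with one fixed element is exactly your family $M_c = \{\{\infty,c\}\}\cup\{\{c-i,c+i\}\}$ over $\integers_{p-1}$, and the paper's step (iii) of discarding the pair containing the auxiliary index $p+1$ is your reduction from odd $p$ to even $p+1$). The genuinely additional content you supply is the lower bound: the observation that each color class is a matching of size at most $\lfloor p/2\rfloor$, which yields $p$ colors as a lower bound in the odd case where the naive vertex-degree argument only gives $p-1$ --- this is the part the paper delegates entirely to the citation. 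Your verification that each $M_c$ is a perfect matching and that the classes are pairwise disjoint (via invertibility of $2$ modulo the odd number $p-1$) is the right bookkeeping and is carried out correctly, so the argument is complete and self-contained.
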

\noindent For example, Table \ref{table:edgeColoring} compares two edge-colorings for $\MC{K}_6$; the left graph  represents a trivial edge coloring with mutually distinct colors, while the right graph is an
 example of Theorem \ref{thm:edgeColoring} with a minimal number of colors.

Note that  our usage of graph is unrelated  to Gaussian graphical models, where the presence of an edge implies  nonzero partial correlation in a true precision matrix. 

\begin{table}[h]
    \centering
    \renewcommand{\arraystretch}{1.1}
    \begin{tabularx}{.95\textwidth}{ m{3.7cm}*2{|X} }    
    \hline
    Coloring scheme & with mutually distinct colors & with minimal number of colors  \\ \hline
    Graph with coloring (e.g. $\MC{K}_p$ with $p=6$) & 
    \includegraphics[width=1.5in]{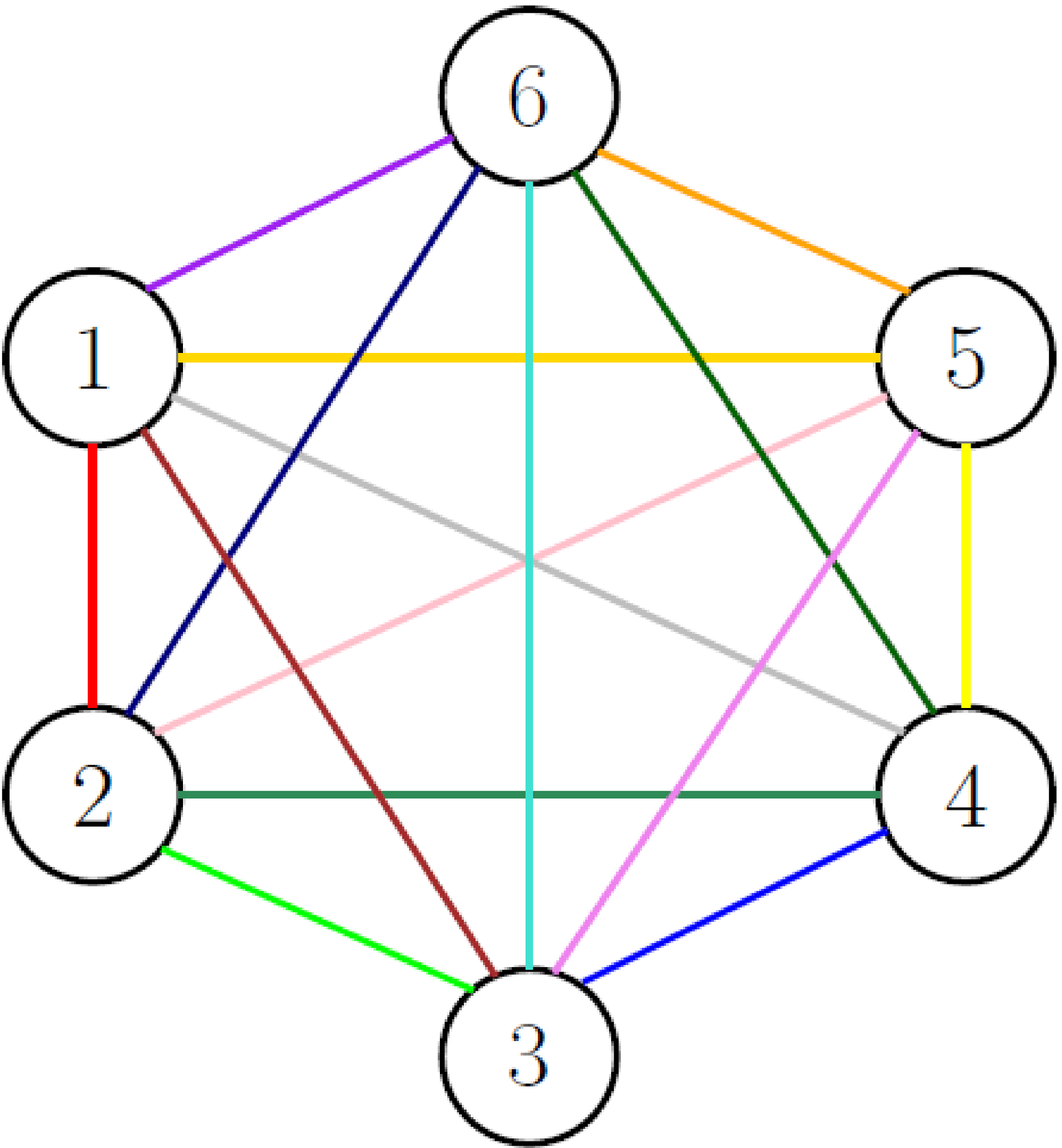} & 
    \includegraphics[width=1.5in]{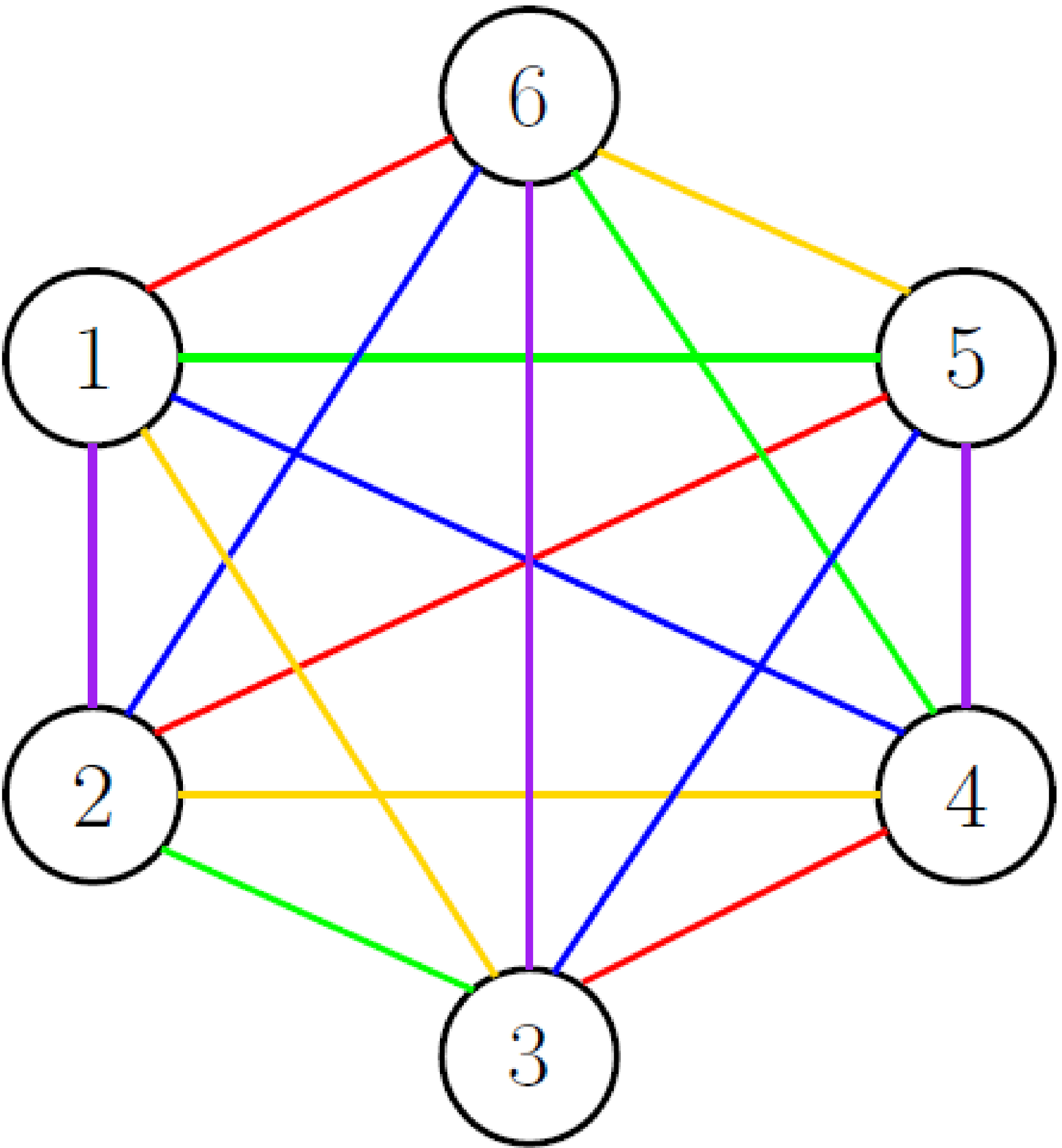}  \\ \hline
    \# of colors &  $p(p-1)/2 = 15$  & $p-1 = 5$ \\ \hline
    \# of edge(s) for each color &  $1$ &  $p/2 = 3$ \\ \hline
    Collections of edges of the same colors 
    & $\{12\}$, $\{13\}$, $\{14\}$, $\{15\}$, $\{16\}$, $\{23\}$, $\{24\}$, $\{25\}$, $\{26\}$, $\{34\}$, $\{35\}$, $\{36\}$, $\{45\}$, $\{46\}$, $\{56\}$
    & $\{16, 25, 34\}$, $\{15, 23, 46\}$, $\{14, 26, 35\}$, $\{13, 24, 56\}$, $\{12, 36, 45\}$ \\ \hline
    \end{tabularx}
    \caption{An intuitive explanation of edge coloring.}
    \label{table:edgeColoring}
\end{table}

\section{Parallel Coordinate Descent algorithm for CONCORD (CONCORD-PCD)}\label{sec:method}


In this Section, we construct the proposed algorithm and explain implementation  details. 
We begin with a motivational example.
Suppose $p=6$, and let $\hat{\Omega} = (\hat{\omega}_{ij})$ be the current iterate of the CONCORD-CD algorithm. 
From \eqref{eqn:offdiag},  the elements used to calculate $\hat{\omega}_{16}^{\rm new}$, $\hat{\omega}_{25}^{\rm new}$,  and $\hat{\omega}_{34}^{\rm new}$ can be displayed as below: 

\begin{center}
\scalebox{0.7}{
\begin{tabular}{ccc}
$\hat{\omega}_{16}^{\rm new} \gets \BPA{*6c}
    \hat{\omega}_{11}& \hat{\omega}_{12} & \hat{\omega}_{13} & \hat{\omega}_{14} & \hat{\omega}_{15} &  \\
    \hat{\omega}_{12}& & & & \times & \hat{\omega}_{26}\\
    \hat{\omega}_{13}& & & \times & & \hat{\omega}_{36}\\
    \hat{\omega}_{14}& & \times& & & \hat{\omega}_{46}\\
    \hat{\omega}_{15}& \times & & & & \hat{\omega}_{56}\\
    & \hat{\omega}_{26}& \hat{\omega}_{36} & \hat{\omega}_{46} & \hat{\omega}_{56} & 
    \hat{\omega}_{66}
\EAP
$ &
$\hat{\omega}_{25}^{\rm new} \gets \BPA{*6c}
    & \hat{\omega}_{12}& & & \hat{\omega}_{15}& \times\\
    \hat{\omega}_{12} & \hat{\omega}_{22}& \hat{\omega}_{23}& \hat{\omega}_{24} & & \hat{\omega}_{26} \\
    & \hat{\omega}_{23} & & \times& \hat{\omega}_{35}& \\
    & \hat{\omega}_{24}& \times& & \hat{\omega}_{45}&  \\
    \hat{\omega}_{15}&  & \hat{\omega}_{35} & \hat{\omega}_{45} & \hat{\omega}_{55} & \hat{\omega}_{56}  \\
    \times & \hat{\omega}_{26}& &  & \hat{\omega}_{56}& 
\EAP$
&
$\hat{\omega}_{34}^{\rm new} \gets \BPA{*6c}
    & & \hat{\omega}_{13}& \hat{\omega}_{14}& & \times\\
    & & \hat{\omega}_{23}& \hat{\omega}_{24}& \times & \\
    \hat{\omega}_{13}& \hat{\omega}_{23} & \hat{\omega}_{33} &  & \hat{\omega}_{35} & \hat{\omega}_{36} \\
    \hat{\omega}_{14} & \hat{\omega}_{24}& & \hat{\omega}_{44}& \hat{\omega}_{45} & \hat{\omega}_{46}\\
        & \times & \hat{\omega}_{35}& \hat{\omega}_{45} & & \\
   \times & &\hat{\omega}_{36} &\hat{\omega}_{46} &  & 
\EAP$

\end{tabular}
}
\end{center}

We note that the updates of the three elements considered  do not use each other; otherwise, they would have appeared at the locations indicated as ``$\times$''.
To understand the implication, suppose that ${\omega}_{16}$, ${\omega}_{25}$,  and ${\omega}_{34}$ 
 are scheduled to be consecutively updated in the CONCORD-CD algorithm. 
The algorithm runs the three updates serially with a single processing unit.
However, by the independency observed above, the actual computation of the three updates can run \emph{simultaneously} on multiple processing units sharing memory storing  $\{\hat{\omega}_{ij}\} \backslash \{ \hat{\omega}_{16}, \hat{\omega}_{25}, \hat{\omega}_{34}\}$. 
Thus, under a parallel computing environment, the three serial steps of updates can be replaced with one parallel step.
We would like to mention that the associated edges $16$, $25$, and $34$ are colored with the same color in the right part of Table \ref{table:edgeColoring}.
In fact, we can show that any collection of $\hat{\omega}_{ij}$, with the associated edges assigned the  same color, can be updated simultaneously if they are consecutively updated in the CONCORD-CD algorithm. 
In this example, $p(p-1)/2=15$ of serial steps of updates can be replaced with $p-1=5$ steps with the aid of multiple processing units.

The following subsections generalize the motivation. In Section \ref{subsec:analogy},  we propose an analogy between the edge coloring of $\MC{K}_p$ and the scheduling of off-diagonal updates in the CONCORD-CD algorithm. 
In Section \ref{subsec:circle},
we employ the circle method, a particular scheme for edge-coloring $\MC{K}_p$, to explain the proposed parallelization of the CONCORD-CD algorithm. We hereafter refer to the proposed algorithm as CONCORD-PCD.
In Section \ref{subsec:algo}, we describe the complete algorithm and provide the implementation details.
The theoretical guarantees are deferred to  Section \ref{sec:theory}.

\subsection{Analogy between edge coloring and update ordering}\label{subsec:analogy}

 We now assosiate vertex $r$ of the complete graph $\kappa_p$ with the $r$-th variable and then edge $ij$ with $\omega_{ij}$ of the given data. We propose the following analogy:
\begin{center}
(A) Associate the edge-coloring of edge $ij$ by color $k$ \\ with the update of $\hat{\omega}_{ij}$ as in \eqref{eqn:offdiag} at the $k$-th step.
\end{center}
For example, coloring all edges with colors 1  through $p(p-1)/2$ is a trivial edge-coloring of $\MC{K}_p$. By (A), this coloring scheme is associated with the  original CONCORD-CD algorithm: all the coordinate descent updates of the off-diagonal elements run serially.
On the other hand, coloring multiple edges $ij$ with the same $k$-th color means that the associated $\omega_{ij}$'s  are simultaneously updated given the same current iterate.
In Section \ref{sec:theory}, we will show the well-definedness of (A), i.e., any set of edges colorable with the same color can  be updated simultaneously.
%

\subsection{The circle method of edge-coloring $\MC{K}_p$}\label{subsec:circle}

The circle method is used to assign colors to the edges of $\MC{K}_p$ with minimal number of colors. See \cite{Dinitz2006} for a comprehensive review.
By (A), application of the circle method implies that $p/2$ elements can be updated simultaneously, and 
$(p-1)$ stpes (i.e., colors) are required to update all off-diagonal elements if $p$ is even.
Where $p$ is odd, $(p-1)/2$ off-diagonal elements can be updated
simultaneously with $p$ steps.

Here, we provide a sketch of the circle method.  Its implementation details in Algorithm  \ref{algo:bcd1}.
 We define  a variable $p_{even}$ as
$p_{even}=p$ if $p$ is even and $p_{even}=p+1$ if $p$ is odd to handle the differences between the two situations.
The circle method of CONCORD-PCD consists of following steps:
\begin{itemize}
    \item[(i)] Clockwisely rotate  the round-robin table with the $(1,1)$ element is fixed, which results in $p_{even}-1$ distinct tables:
\end{itemize}    
\begin{center}
\scalebox{0.55}{
    \begin{tabular}{*{17}{|c}|} \cline{1-5} \cline{7-11} \cline{13-17}
    $1$ & $2$  & $3$ & $\cdots$ & ${p_{even}/2}$& \multirow{2}{*}{$\rightarrow$} & 
    $1$ & ${p_{even}}$ & $2$ & $\cdots$ & ${p_{even}/2-1}$&
    \multirow{2}{*}{$\rightarrow \cdots \rightarrow$} &
    $1$ &   $3$ &   $4$ &  $\cdots$ &  ${p_{even}/2 +1}$  \\ \cline{1-5} \cline{7-11} \cline{13-17}
    ${p_{even}}$ & ${p_{even}-1}$  & ${p_{even}-2}$ & $\cdots$ & ${p_{even}/2+1}$&  & 
    ${p_{even}-1}$ & ${p_{even}-2}$ & ${p_{even}-3}$ & $\cdots$ & ${p_{even}/2}$ & &
    $2$ &  ${p_{even}}$ &  ${p_{even}-1}$ &  $\cdots$ &   $p_{even}/2+2$ \\ \cline{1-5} \cline{7-11} \cline{13-17}
    \end{tabular}
    }
\end{center}

\begin{itemize}
\item[(ii)] Define target sets: We call a pair of two indices in the same column as a matching pair. 
We define the $k$-th target set, $I_k$, as the collection of all matching pairs in the $k$-th table  in (i), $k=1,\ldots,p_{even}-1$. For example, the $k$-th target set in the first table in (i)  is $I_k = \{ \{1, p_{even}\}, \{2, p_{even}-1\}, \ldots, \{p_{even}/2, p_{even}/2+1 \} \}$. 

\item[(iii)] Discard a pair containing the $(p+1)$ index in $I_k$, $k=1,\ldots,p_{even}-1$
if $p$ is odd. 

\item[(iv)] Color $I_k$ (the edges associated with $I_k$) by the $k$-th color, $k=1,\ldots,p_{even}-1$. In other words, update the off-diagonal elements associated to $I_k$ simultaneously at the $k$-th parallel step.
\end{itemize}

Consequently, we update the off-diagonal elements of $\hat{\Omega}$ in $p_{even} -1$ steps. 
Note that the pair in (iii) is implicitly discarded in the implemented circle method, because
we can skip the pair
containing the $(p+1)$-th index when updating the off-diagonal elements.
This circle method  applies regardless of whether $p$ is even or odd since
the numbers of pairs and iterations are $(p/2, p-1)$ where $p$ is even
and $((p+1)/2-1, (p+1)-1) = ((p-1)/2, p)$ where $p$ is odd, in which case
a pair is discarded and the number of pairs to be simultaneously updated 
is computed by $p_{even}/2$ (i.e., $p_{even}/2 - 1$).

\subsection{A complete algorithm and implementation details}\label{subsec:algo}

A complete CONCORD-PCD algorithm is described in  Algorithm \ref{algo:bcd1}.
The inner loop of the complete CONCORD-PCD algorithm consists of two parallel update procedures
for off-diagonal elements and diagonal elements.
As described in the previous section, 
the parallel update of off-diagonal elements involves
$p_{even}-1$ steps of updating $p_{even}/2$ elements in parallel.
In addition, the parallel update of diagonal elements involves one step
since all $p$ diagonal elements can be updated simultaneously
with the given off-diagonal elements.
Thus, the complete algorithm runs $p_{even}$ steps per one outer iteration. The algorithm converges to a global minima, which is proved in Theorem \ref{thm:algoconv} in Section \ref{sec:theory}.

To further accelerate CONCORD-PCD, we also apply the cyclic reduction technique for pairwise comparison
to calculate $|\hat{\Omega}^{(k)}- \hat{\Omega}|_\infty$,
where $|A|_\infty = \max_{i,j} |A_{ij}|$ is the maximum absolute value of matrix $A$. 
Let $\hat{\theta} = (\hat{\theta}_1,\ldots, \hat{\theta}_m)={\rm vech}(\hat{\Omega})$,
which is a half-vectorization for the parameter estimate $\hat{\Omega}$,
and ${\bf d} = (d_j)_{1\le j \le m} = \hat{\theta}^{new}-\hat{\theta}^{old}$.
We further let $z = \lceil \log_2(m) \rceil$, where $\lceil x \rceil$
is the smallest integer greater than or equal to $x$.
Consider a calculation of $\|{\bf d}\|_\infty$,
where $\|{\bf d}\|_\infty = \max_j |d_j|$ is the $L_\infty$-norm for vector ${\bf d}$.
The pairwise comparison in the proposed algorithm is conducted as follows:
\begin{itemize}
    \item Initialization: for $q=z-1$,

$d_j \gets \max(|d_j|, |d_{j+2^q}|)$ if $j+2^q\le m$
and $d_j \gets d_j$ if $j+2^q > m$ for $j=1,\ldots, 2^q$,

   \item Cyclic reduction: for $q=z-2, \ldots, 0$,  
   
$d_j \gets \max(|d_j|, |d_{j+2^q}|)$  for $j=1,\ldots, 2^q$.
\end{itemize}

After the cyclic reduction step for $q=0$, 
the first element $d_1$ of ${\bf d}$ becomes equal to $\|{\bf d}\|_\infty$.
With GPU-parallel computation, we can simultaneously compare
$2^q$ pairs for each step in the cyclic reduction,
and then the computational cost can be 
reduced as $O(\log_2(m))$ if $2^{z-1}$ CUDA cores are available.

\begin{algorithm}[h]
\caption{Parallel coordinate descent algorithm for CONCORD (CONCORD-PCD)} \label{algo:bcd1}
\begin{algorithmic}[1] 
\Require Data matrix ${\bf X}$ of size $n$ by $p$, $\hat{\Omega}^{(0)}=(\hat{\omega}_{ij}^{(0)})$, $\lambda$, and $\delta_{tol}$

    \State $t \gets 0$, $\hat{\Omega} \gets \Omega^{(0)}$, $T \gets {\bf X}^T {\bf X}$, $p_{even} \gets p$ \Comment{initialization}
    \If{$p$ is odd}
     \State{$p_{even} \gets p+1$}
    \EndIf
    \State{ $(j_1, \ldots, j_{p_{even}})  \gets (1, \ldots, p_{even})$}
    \Comment{initialization of index set}
     \Repeat 
       \State $t \gets t+1$
       \For{$k=1,2,\ldots, p_{even}-1$}
       \Comment{updating off-diagonal elements}
            \State Define a target set $I = \{ (r,s) ~|~ r = j_q, \, s = j_{p_{even}-q+1}, \, q=1,2, \ldots, p_{even}/2 \}$
            \State Update, for all $(r,s) \in I$ such that $r,s\neq p+1$,
            \Comment computed in parallel
                \[ \hat{\omega}_{rs}
\gets \frac{{\rm Soft}_{\lambda}(-\sum_{u \neq s} \hat{\omega}_{ru} T_{su}
- \sum_{u \neq r} \hat{\omega}_{us} T_{ru} )  }{T_{rr} + T_{ss}} \] 
            \State tmp $\gets j_{p_{even}}$, $(j_3, \ldots, j_{p_{even}}) \gets (j_2, \ldots, j_{p_{even}-1})$, $j_2 \gets$ tmp
       \EndFor 
       \For{$k=1,2,\ldots,p$} 
       \Comment{updating diagonal elements in parallel}
       	\State \[ \hat{\omega}_{ii}
\gets \frac{- \sum_{j\neq i} \hat{\omega}_{ij} T_{ij} + \sqrt{ \big(\sum_{j\neq i} \hat{\omega}_{ij} T_{ij} \big)^2 + 4 n T_{ii}}}{2 T_{ii}} \] 
       \EndFor
       \State {$\delta \gets | \hat{\Omega}^{(t)} - \hat{\Omega} |_\infty$}
       \Comment{computed by cyclic reduction}
       \State {$\hat{\Omega}^{(t)} \gets \hat{\Omega}$}
   \Until{ $\delta< \delta_{tol}$}      
\end{algorithmic}
\end{algorithm}

\section{Properties}\label{sec:theory}

In this Section, we prove computational properties of CONCORD-PCD algorithm.

Recall the motivating example in Section \ref{sec:method} in which $\hat{\omega}_{16}$, $\hat{\omega}_{25}$ and $\hat{\omega}_{34}$ are simultaneously updateable in the sense that
their updates do not require each other's current iterates.
The following lemma characterizes   a sufficient condition for independent updates. 
%
%
\begin{lemma}\label{lemma:PCD1} Suppose that two edges $\{i,j\}$ and $\{k,l\}$ of $\MC{K}_p$ are colorable by the same color.
Then, the updates of $\hat{\omega}_{ij}$ and $\hat{\omega}_{kl}$ by the CONCORD-CD algorithm does not contain each other.
\end{lemma}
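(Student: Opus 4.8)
The plan is to unpack the two update formulas in \eqref{eqn:offdiag} for $\hat{\omega}_{ij}$ and $\hat{\omega}_{kl}$ and verify, termwise, that neither formula references the other coordinate. Recall that the update of $\hat{\omega}_{rs}$ (for $r\neq s$) reads
\[
\hat{\omega}_{rs}^{\rm new}
\gets \frac{{\rm Soft}_{\lambda}\big(-\sum_{s' \neq s} \hat{\omega}_{rs'} T_{ss'}
- \sum_{r' \neq r} \hat{\omega}_{r's} T_{rr'} \big)}{T_{rr} + T_{ss}},
\]
so the set of off-diagonal iterates appearing on the right-hand side is exactly
\[
S(r,s) = \{ \hat{\omega}_{rs'} : s' \neq s,\, s'\neq r \} \cup \{ \hat{\omega}_{r's} : r' \neq r,\, r'\neq s \},
\]
i.e.\ the current iterates that share exactly one index with the pair $\{r,s\}$ (the diagonal entries $\hat{\omega}_{rr}$, $\hat{\omega}_{ss}$ and the data quantities $T$ are held fixed during off-diagonal updates, so they are irrelevant). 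The claim is then equivalent to: $\hat{\omega}_{kl}\notin S(i,j)$ and $\hat{\omega}_{ij}\notin S(k,l)$.

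First I would translate the hypothesis. Two edges $\{i,j\}$ and $\{k,l\}$ of $\MC{K}_p$ being colorable by the same color means precisely that they are non-adjacent, i.e.\ $\{i,j\}\cap\{k,l\}=\emptyset$: a proper edge coloring forbids adjacent edges from sharing a color, and conversely any matching (set of pairwise disjoint edges) can consistently receive one color. So it suffices to assume $i,j,k,l$ are four distinct vertices and derive the conclusion.

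Next I would carry out the containment check. A generic element of $S(i,j)$ is of the form $\hat{\omega}_{ij'}$ with $j'\notin\{i,j\}$, or $\hat{\omega}_{i'j}$ with $i'\notin\{i,j\}$; in either case the index set of that entry, viewed as an unordered pair, contains $i$ or contains $j$. For $\hat{\omega}_{kl}$ to equal such an entry we would need $\{k,l\}$ to meet $\{i,j\}$, contradicting disjointness. Hence $\hat{\omega}_{kl}\notin S(i,j)$, and by the symmetric argument $\hat{\omega}_{ij}\notin S(k,l)$. Therefore neither update formula contains the other coordinate, which is the assertion of the lemma.

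This argument is essentially a bookkeeping exercise, so I do not expect a genuine obstacle; the only point that needs care is the first step — making explicit that ``colorable by the same color'' is the same as ``vertex-disjoint'' for edges of a complete graph — and being precise that during the off-diagonal phase the diagonal entries and the Gram matrix $T$ are frozen, so the only iterates that can create a dependency are the off-diagonal ones enumerated in $S(\cdot,\cdot)$. If one wanted to be extra careful one could also note the degenerate sub-cases (e.g.\ $p=4$, where $S(i,j)$ still only contains entries touching $i$ or $j$), but these are subsumed by the general disjointness argument.
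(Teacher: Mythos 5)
Your proof is correct and follows essentially the same route as the paper's: identify the set of off-diagonal iterates appearing in the update formula (your $S(r,s)$ plays the role of the paper's $U(\{i,j\})=\tilde{U}(i,j)\cup\tilde{U}(j,i)$), observe that same-colorability of the two edges forces $i,j,k,l$ to be distinct, and conclude by noting every entry in that set shares an index with $\{i,j\}$. No gaps; the explicit remark that diagonal entries and $T$ are frozen is a nice touch but not a substantive departure.
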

\begin{proof}
For an edge $\{i, j\}$, we define $U(\{i, j\})$ as the family of coordinates
needed to update $\omega_{ij}$  by \eqref{eqn:offdiag}.
From the two summation operations in the right-hand side of \eqref{eqn:offdiag}, we have $U(\{i, j\}) = \tilde{U}(i,j) \cup \tilde{U}(j,i)$, where $\tilde{U}(i,j)$ is defined as
\[
\tilde{U}(i, j) := \{ (i, i') \,:\, i' \neq j, ~ 1 \leq i' \leq p \}, ~~~  \mbox{for}~ 1 \leq i, j \leq p ~ \mbox{and} ~ i \neq j.
\]
By the definition of edge coloring, if two edges are colorable by the same color, then they do not
 share vertices, i.e., $i,j,k,l$ are distinct integers.
Observe that $k \neq i$ and $l \neq i$ imply $(k,l) \notin U(i,j)$ and $(l,k) \notin U(i,j)$. Similarly, by $k \neq j$ and $l \neq j$, we have $(k,l) \notin U(j,i)$ and $(l,k) \notin U(j,i)$. Combining these leads to $(k, l) \notin U(\{i, j\})$ and $(l, k) \notin U(\{i, j\})$. Hence, $\omega_{kl}$ is not used for updating $\omega_{ij}$. 
In contrast,  we can verify that $\omega_{ij}$ is not used for the update of $\omega_{kl}$ by interchanging  the role of subscripts.
\end{proof}
\begin{lemma}\label{lemma:PCD3}
Suppose that any collection of edges of $\MC{K}_p$, say,  $\{i_1 j_1, \ldots, i_q j_q\}$, is colorable with the same color.
Then, the associated elements in $\hat{\Omega}$, that is, $\hat{\omega}_{i_1 j_1}$ through $\hat{\omega}_{i_q j_q}$,are simultaneously updatable by the CONCORD-PCD algorithm.
\end{lemma}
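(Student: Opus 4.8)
\textbf{Proof plan for Lemma \ref{lemma:PCD3}.}

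The plan is to reduce the claim about a whole monochromatic class of edges to the pairwise statement already established in Lemma \ref{lemma:PCD1}, and then to argue that pairwise independence of the updates is exactly what is needed for simultaneous (parallel) updating. First I would fix a collection of edges $\{i_1 j_1, \ldots, i_q j_q\}$ all receiving the same color, and recall the set $U(\{i,j\}) = \tilde{U}(i,j) \cup \tilde{U}(j,i)$ of coordinates consulted by the update rule \eqref{eqn:offdiag}, as defined in the proof of Lemma \ref{lemma:PCD1}. The key structural fact is that the edges $i_1 j_1, \ldots, i_q j_q$, sharing one color, are pairwise non-adjacent, hence form a matching; in particular the $2q$ indices $i_1, j_1, \ldots, i_q, j_q$ are all distinct. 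I would then note that Lemma \ref{lemma:PCD1} applies to every pair $i_a j_a$, $i_b j_b$ with $a \neq b$, giving $(i_b, j_b) \notin U(\{i_a, j_a\})$ and $(j_b, i_b) \notin U(\{i_a, j_a\})$ for all $a \neq b$; equivalently, $\hat{\omega}_{i_b j_b}$ does not appear among the coordinates needed to update $\hat{\omega}_{i_a j_a}$.

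Next I would translate this into a statement about the parallel step. Write $\hat{\Omega}$ for the current iterate entering this step, let $S = \{ i_1 j_1, \ldots, i_q j_q \}$ be the index set of the class, and let $\hat{\Omega}|_{S^c}$ denote the restriction of $\hat{\Omega}$ to the coordinates outside $S$ (together with all diagonal entries). The update \eqref{eqn:offdiag} for $\hat{\omega}_{i_a j_a}$ is a function of $T$ and of the entries $\hat{\omega}_{uv}$ with $(u,v) \in U(\{i_a, j_a\})$; by the previous paragraph, $U(\{i_a, j_a\}) \cap S = \emptyset$, so this update depends only on $\hat{\Omega}|_{S^c}$, not on any of the other $\hat{\omega}_{i_b j_b}$ being revised in the same step. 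Consequently, whether one performs the $q$ updates sequentially in any order, or all at once reading from the common snapshot $\hat{\Omega}|_{S^c}$, the resulting values $\hat{\omega}_{i_a j_a}^{\mathrm{new}}$ are identical: each sequential update, when its turn comes, sees exactly the same relevant entries it would have seen in the snapshot, since the only entries that changed in the meantime lie in $S$ and are irrelevant to it. This is precisely the assertion that $\hat{\omega}_{i_1 j_1}, \ldots, \hat{\omega}_{i_q j_q}$ are simultaneously updatable by CONCORD-PCD, and it matches step 9--10 of Algorithm \ref{algo:bcd1}, where the target set $I$ is updated "in parallel."

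The only mildly delicate point — and the step I would treat most carefully — is the passage from \emph{pairwise} independence to \emph{mutual} independence of the whole class: one must make sure that no chain of dependencies can arise indirectly. Here the argument is clean because $U(\{i_a, j_a\}) = \tilde{U}(i_a, j_a) \cup \tilde{U}(j_a, i_a)$ only ever contains coordinates incident to $i_a$ or to $j_a$, while the distinctness of all $2q$ vertices in the matching guarantees that $i_b$ and $j_b$ are incident to neither for $b \neq a$; so the disjointness $U(\{i_a, j_a\}) \cap S = \emptyset$ holds simultaneously for every $a$, with no interaction among the pairs. I would also remark that the diagonal entries, which do lie in every $U(\{i_a,j_a\})$, are not being modified during the off-diagonal sweep, so they too can be read safely from the shared snapshot. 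With these observations in place the lemma follows directly from Lemma \ref{lemma:PCD1}, and no further computation is required.
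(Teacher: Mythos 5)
Your proposal is correct and follows essentially the same route as the paper, which simply observes that Lemma \ref{lemma:PCD3} is a straightforward consequence of applying Lemma \ref{lemma:PCD1} pairwise to the monochromatic (hence vertex-disjoint) collection of edges. Your additional care about passing from pairwise to mutual independence, and about the diagonal entries being read-only during the off-diagonal sweep, is a sound elaboration of the detail the paper leaves implicit rather than a different argument.
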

 Lemma \ref{lemma:PCD3} straightforward from Lemma \ref{lemma:PCD1}. The Lemmas provides a characterization for the motivating example as well as Table \ref{table:edgeColoring}:  the sufficient condition for the simultaneous updatability of  $\hat{\omega}_{16}$, $\hat{\omega}_{25}$ and $\hat{\omega}_{34}$ is from the observation that the edges $16$, $25$, and $34$ are colorable with the same color.

Using Lemma \ref{lemma:PCD3}, we can show the global convergence property of the proposed algorithm.
\begin{theorem}\label{thm:algoconv} 
Algorithm \ref{algo:bcd1} converges to the minimizer of \eqref{eqn:concord}. 
\end{theorem}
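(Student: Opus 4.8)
The plan is to reduce the convergence of Algorithm~\ref{algo:bcd1} to the known convergence of CONCORD-CD established in \cite{Khare2015}, by showing that CONCORD-PCD produces exactly the same sequence of iterates as a suitably ordered CONCORD-CD run. First I would fix one outer iteration and examine the off-diagonal update block, the \texttt{for}-loop over $k=1,\ldots,p_{even}-1$. By the circle-method construction of the target sets $I_1,\ldots,I_{p_{even}-1}$ (Section~\ref{subsec:circle}), these sets partition all $p(p-1)/2$ off-diagonal index pairs (after discarding pairs containing the $(p+1)$-st dummy index when $p$ is odd), and each $I_k$ consists of pairwise vertex-disjoint edges of $\MC{K}_p$ colored with the $k$-th color. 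Concatenating the pairs within each $I_k$ in any fixed internal order and then reading the blocks $I_1,\ldots,I_{p_{even}-1}$ in sequence yields an enumeration $\sigma$ of all off-diagonal coordinates: this $\sigma$ is precisely an admissible cyclic order for a CONCORD-CD sweep.

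The key step is then to invoke Lemma~\ref{lemma:PCD3}: for each $k$, the elements indexed by $I_k$ are simultaneously updatable, meaning the parallel update of all $\hat{\omega}_{rs}$ with $(r,s)\in I_k$ from the common current iterate produces the \emph{identical} values that a serial CONCORD-CD would produce if it updated those same coordinates consecutively in the order $\sigma$ restricted to $I_k$ — because none of these updates reads a coordinate that another is writing, the right-hand side of \eqref{eqn:offdiag} is unchanged whether the sibling updates in $I_k$ have already happened or not. Chaining this across $k=1,\ldots,p_{even}-1$ shows that one off-diagonal block of CONCORD-PCD reproduces one full off-diagonal sweep of CONCORD-CD under the order $\sigma$. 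Next I would handle the diagonal block: all $p$ diagonal updates \eqref{eqn:diag} depend only on off-diagonal entries $\hat{\omega}_{ij}$ ($i\neq j$) and on the fixed Gram matrix $T$, never on another diagonal entry, so updating them simultaneously equals updating them serially in any order. Hence one full outer iteration of Algorithm~\ref{algo:bcd1} equals one full cyclic sweep of CONCORD-CD (off-diagonals then diagonals) with the deterministic coordinate order induced by the rotating round-robin table, and this order is the same at every outer iteration.

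Having established this exact equivalence of iterate sequences, the conclusion follows: \cite{Khare2015} proved that the cyclic coordinate descent algorithm on the convex, coercive, block-separable objective $L(\Omega;\lambda)$ in \eqref{eqn:concord} converges to a global minimizer, and the argument there applies to any fixed cyclic order over the coordinates (the diagonals grouped or interleaved does not matter, since each coordinate subproblem has a unique closed-form minimizer given by \eqref{eqn:diag}–\eqref{eqn:offdiag} and $L$ is convex with a unique-in-each-coordinate minimizer); since CONCORD-PCD generates the same sequence $\{\hat{\Omega}^{(t)}\}$, it inherits the same limit, and the stopping rule $\delta<\delta_{tol}$ is just a finite-precision truncation of that convergent sequence. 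I expect the main obstacle to be the bookkeeping in the first two steps: one must verify carefully that the round-robin rotation in lines 8–12 of Algorithm~\ref{algo:bcd1} does realize the circle-method coloring (so that each $I_k$ is genuinely a color class and the union over $k$ is genuinely all off-diagonal pairs, with the $(p+1)$-dummy handled correctly), and one must be precise that "simultaneous update equals serial update in this order" — the content of Lemma~\ref{lemma:PCD3} — is exactly what licenses splicing the parallel step into a CONCORD-CD trajectory without perturbing any subsequent right-hand side. If \cite{Khare2015} states convergence only for one specific coordinate order, a minor supplementary remark is needed that their proof (monotone descent of $L$, plus the fact that a fixed point of all coordinate maps is a global minimizer by convexity) is order-agnostic.
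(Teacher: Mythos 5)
Your proposal is correct and follows essentially the same route as the paper's proof: serialize the parallel blocks $I_1 \to \cdots \to I_{p_{even}-1} \to J$ into a full cyclic sweep, use Lemma~\ref{lemma:PCD3} to show the parallel and serialized updates produce identical iterates, and then invoke a convergence guarantee for cyclic coordinate descent. The only difference is the final citation: the paper appeals directly to Theorem 5.1 of \cite{Tseng2001}, which is stated for an arbitrary cyclic order over a disjoint partition of the coordinates and therefore disposes of the order-agnosticism caveat you raise about \cite{Khare2015} at the end.
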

\begin{proof}
We will show that the updates of Algorithm \ref{algo:bcd1} are essentially the serial reordering of the CONCORD-CD algorithm.
To fix the idea, assume that $p$ is even (extending to odd $p$ is straightforward).
We further  fix one outer loop at line 7  of  Algorithm \ref{algo:bcd1}.
For the inner parallel step $k$, $k=1, \ldots, p-1$, let  $I$ be the target set defined at line 9, which coincides the $k$-th target set $I_k$ in Section \ref{subsec:circle}.
Let $J = \{(1,1), \ldots, (p,p)\}$ denote the indices for the main diagonal.
Then, the update order of the indices of $\Omega$ given the algorithm  is
\[
\mbox{\sf U1}: ~~~~~~~ I_1 \to I_2 \to \cdots \to I_{p-1} \to J, 
\]
where the elements associated with  each set is calculated  simultaneously.
Now, consider a serialized  update of {\sf U1}, say {\sf U2}, which inherits the order in {\sf U1},
and  the elements in each $I_k$ and $J$ are arbitrarily ordered.
We can apply Lemma \ref{lemma:PCD3} to inductively verify that {\sf U1} and {\sf U2} produce exactly the same updated $\hat{\Omega}$.
Now recall that $I_1, \ldots, I_{p-1}$, and $J$ in {\sf U1} are a disjoint union for all coordinates $\{ (i,j): 1 \leq i,j \leq p\}$.
The serialized  update scheme {\sf U2} then satisfies the conditions of Theorem 5.1 in \cite{Tseng2001}, which guarantees that convergence  to the global minima.
Thus, iterating {\sf U1} also converges to the global minima, which completes the proof.
\end{proof}

The construction of {\sf U1} in the proof can easily be extended to arbitrary edge coloring of $\MC{K}_p$. Specifically, given an edge coloring of $\MC{K}_p$ with colors $1, 2, \ldots, C$, one can mimic the proof
to organize a parallelizable update order of CONCORD-CD algorithm with $C$ steps 
for the off-diagonal elements plus $1$ step for the diagonal elements.
One would naturally want to know how much we can reduce the number $C$ while preserving convergence, considering that the fewer the steps we need to follow, the more we can maximize the utility of parallel processing units. 
We note that the number of parallel steps for the off-diagonal update in Algorithm \ref{algo:bcd1} is \emph{minimal}. This is due to the construction of our edge coloring with $p-1$ (for even $p$) or $p$ (for odd $p$) colors, which is guaranteed as the minimal possible number of edge colors by Theorem \ref{thm:edgeColoring}.

\section{Numerical Study}\label{sec:simulation}

To illustrate the computational advantage of the proposed parallelization implemented on a GPU, we compare the computation time of the CONCORD-CD algorithm of \cite{Khare2015} and the proposed CONCORD-PCD algorithm. 
We developed an R package
\texttt{pcdconcord} where the CONCORD-PCD algorithm is implemented
with a dynamic library using CUDA C, which is available at
\blue{\url{https://sites.google.com/view/seunghwan-lee/software}}.
We refer to  CONCORD-PCD as ``PCD-GPU'' in the comparison to emphasize that the proposed algorithm is running on GPUs.
Next, the CONCORD-CD algorithm is available in R package \texttt{gconcord} and implemented
with a dynamic library using C with BLAS (basic linear algebra subroutine) \citep{Lawson1979}. We describe the CONCORD-CD implemented in \texttt{gconcord} as ``CD-BLAS''. 
In addition to two main algorithms (CD-BLAS and PCD-GPU), we also implemented a CONCORD-CD without BLAS, ``CD-NAIVE'', and CONCORD-PCD 
without computation on GPUs, ``PCD-CPU'',  to study the gain from
GPU parallelization.
We remark that the single precision (32-bit floating point representation)
is more efficient than the double precision (64-bit floating point representation) for the computations on GPUs.
However, the R platform only supports the double precision.
To maximize the efficiency of the GPU in the R environment, we first convert  the double-precision data  in the host (CPU) memory to single-precision data in the device (GPU) memory.
It is worth noting that Python is favorable
for CONCORD-PCD since it supports both single and double
precision for CUDA C. Thus, Python can fully utilize the computation capacity of GPUs with single precision. 
The computation time is measured in seconds on
a workstation (Intel Xeon(R) W-2175 CPU (2.50GHz) and 128 GB RAM 
with NVIDIA GeForce GTX 1080 Ti).
Note that the CONCORD-CD and CONCORD-PCD algorithms
 should produce the same estimates
after convergence since the only difference between the two algorithms is
the updating order of the matrix elements.  
In practice,  small differences might be observed due to numerical errors when the convergence tolerance $\delta_{tol}$ is not sufficiently small.

We used simulated data for the comparison. To be specific, we generate
10 data sets from a multivariate normal distribution $N_p({\bf 0}, \Omega^{-1})$ by varying the sample size ($n=500, 1000, 2000$)  and
 number of variables ($p=500, 1000, 2500, 5000$).
Because the true precision matrix affects the number of iterations for convergence of the estimator,
we also consider AR(2) and scale-free network structures for a true precision matrix, $\Omega$, 
from the literature for sparse precision matrix estimation \citep{Yuan2007, Peng2009}. 
Let $\Omega^{AR}$ and $\Omega^{SC}$,
be precision matrices for the AR(2) and scale-free networks, 
respectively.  For the  AR(2) network, the precision matrix
$\Omega^{AR}=(\omega_{ij}^{AR})_{1\le i,j \le p}$ is defined by
\begin{equation} \nonumber
\omega_{ij}^{AR} =\omega_{ji}^{AR} = \left\{
\begin{array}{lcl}
0.45 & & \mbox{ for } i=1,2,\ldots,p-1,~ j=i+1\\
0.4 & & \mbox{ for } i=1,2,\ldots,p-2,~ j=i+2\\
0  &  & \mbox{ otherwise}
\end{array} \right.
\end{equation}
For scale-free network, the precision matrix
$\Omega^{SC}=(\omega_{ij}^{SC})_{1\le i,j \le p}$ is defined by
the following steps:
\begin{itemize}
\item[](i) Generate a scale-free network $\MC{G}=\MC{G}(V,E)$
according to  Barab\'asi and Albert model \citep{Barabasi1999},
where the degree distribution $P(k)$ of $\MC{G}$ follows the power-law
distribution $P(k) \propto k^{-\alpha}$. We set $\alpha = 2.3$ following \cite{Peng2009},
which is close to the estimate from the real-world
network \citep{Newman2003};

\item[](ii) Generate a random matrix $\tilde{\Omega} = (\tilde{\omega}_{ij})$ by

$\tilde{\omega}_{ij}=\tilde{\omega}_{ji} \sim 
\mbox{Unif}\big([-1,-0.5]\cup [0.5,1]\big)$ for $\{ i,j \} \in E$, $\tilde{\omega}_{ii} = 1$ for $i=1,2,\ldots, p$;

\item[](iii) Scaling off-diagonal elements: 
$\tilde{\omega}_{ij} \gets \tilde{\omega}_{ij}/\big(1.25 \sum_{j\neq i} \tilde{\omega}_{ij}\big)$;

\item[](iv) Symmetrization:
$\Omega^{SC} \gets (\tilde{\Omega} + \tilde{\Omega}^T)/2$.
\end{itemize}

\noindent To avoid nonzero elements  of $\Omega^{SC}$ with
small magnitude,
we set $\omega_{ij}^{SC} \gets 0.1 \cdot {\rm sign}(\omega_{ij}^{SC})$
if $|\omega_{ij}^{SC}| < 0.1$ for $(i,j) \in E$.

In addition, we consider $\lambda = 0.1$ and $\lambda = 0.3$ for
the tuning parameter to evaluate
the performance at different sparsity levels of
the estimate.
Note that we did not search the optimal tuning
parameter for CONCORD since our numerical
studies aim at evaluating computational
gains. 
We set tolerance level as $\delta_{tol} = 10^{-5}$ for the convergence criteria.

Tables \ref{tb:ar2} and \ref{tb:sf} report
the averaged elapsed times for computing 
CD-BLAS, CD-NAIVE, PCD-CPU, and PCD-GPU
for the AR(2) and Scale-free networks, respectively.
We also summarize 
the averages of the number of iterations and
estimated edges of the CD and PCD
algorithms in the same tables to verify  that the proposed and original
algorithms achieve the same solution.

From Tables \ref{tb:ar2} and \ref{tb:sf},
we first observe that 
PCD-GPU is always faster than PCD-CPU
for all cases we considered.
The GPU-parallel computation
is efficient to the CONCORD-PCD algorithm
and plays a key role.
In addition, 
the efficiency of the GPU-parallelization increases
with  the number of variables.
For example, PCD-GPU is 3.08--3.95 times faster
than PCD-CPU for $p=500$, but PCD-GPU is 9.93--10.62
times faster than PCD-CPU for $p=5000$.
Such an increase in efficiency seems natural,
since the CONCORD-PCD 
simultaneously updates $p_{even}/2$ elements.

Next, we see that 
PCD-CPU is slightly slower than CD-NAIVE.
This is due to the fact that the PCD-CPU has an 
additional procedure for 
reordering the elements to be updated (line 9 in Algorithm \ref{algo:bcd1}).
Since the computation time for  CD-NAIVE and PCD-CPU is similar, 
we can conclude that PCD-GPU 
is more efficient than CD-NAIVE as well.

Finally, we compare PCD-GPU and CD-BLAS in 
the original implementation of CONCORD-CD (\texttt{gconcord}), where 
PCD-GPU was more efficient than CD-BLAS
for all cases except $(n,p)=(500,5000)$.
Specifically, PCD-GPU is 1.41 and 6.63 times faster
than CD-BLAS for the worst and the best
cases, respectively.
The efficiency gain grows with an increase in both $n$ and $p$.
For $(n,p)=(500,5000)$,
CD-BLAS is only 1.03--1.19 times
faster than PCD-GPU.

Note that the efficiency of CD-BLAS depends largely on the efficiency of BLAS (implemented by FORTRAN), as is evident from a comparison between CD-BLAS and CD-NAIVE.
For a more precise comparison, 
we replicate Tables \ref{tb:ar2} and \ref{tb:sf} in 
Figures \ref{fig:ar2} and \ref{fig:sf}, respectively.
The figures suggest that
CD-BLAS is more sensitive to
the sample size compared
to PCD-GPU.
In the AR(2) network, for example, the computation time per iteration
is measured as 0.4886 for $(n,p)=(500,1000)$ and 0.8486 for $(n,p)=(2000,1000)$
with CD-BLAS, but as 0.1817 for $(n,p)=(500,1000)$ and 0.1831 for $(n,p)=(2000,1000)$
with PCD-GPU.
This is because 
the incremental computational burden
associated with the sample size
is less for each GPU compared to the CPU because a GPU device has many CUDA cores.
For example, the GPU device NVIDIA GeForce GTX 1080 Ti used in the numerical studies has 3584 CUDA cores.

In addition, we compared the computation times of the graphical Lasso (GLASSO), which is a popular method in the likelihood approach \citep{Friedman2008},
and the constrained $\ell_1$-minimization for the inverse of matrix estimation (CLIME),
which is the constrained $\ell_1$-minimization approach \citep{Cai2011}, with ours. 
For the GLASSO, we used the R package \texttt{glasso} that boosts the original algorithm of \cite{Friedman2008} by adopting block diagonal screening rule \citep{Witten2011}.
For the CLIME, the original algorithm becomes inefficient when $p$ is large.
We apply the FASTCLIME algorithm implemented in R package \texttt{fastclime} \citet{Pang2014},
which is more efficient and uses the parametric simplex method to obtain the whole solution path of the CLIME.
Since solving the problem of the FASTCLIME is still expensive when $p$ is large, we focus on the cases
of $n=500, 1000$, $p=500, 1000$ and
$\lambda = 0.3$ for the CONCORD.
We choose the tuning parameter $\lambda$s of the GLASSO and the CLIME
by searching values that obtain similar sparsity level to that of the CONCORD with
$\lambda=0.3$, because the 
estimators of the GLASSO and CLIME are different to that of the CONCORD. 
Table 
\ref{tb:add}
reports the averages of the computation times and the number
of estimated edges.
We found that the proposed PCD-GPU 
was fastest for AR(2) and
the second-best for the scale-free network. For the scale-free network,
the efficiency of the proposed PCD-GPU 
was comparable to that of the GLASSO because
the differences in the computation times only lie between 0.24 and 1.01.
It 
has been numerically shown that the CONCORD has better performance than
the GLASSO
for identifying the non-zero elements of the precision matrix in \cite{Khare2015}.

To summarize,  
we conclude from the our numerical studies that
the proposed CONCORD-PCD is adequate for
GPU-parallel computation, and more 
efficient than CONCORD-CD when
either the number of variables or the sample size is large.
It is also noteworthy that  we implemented the PCD algorithm
with GPUs by using \texttt{cuBLAS} libary (PCD-GPU-cuBLAS),
but we found that the PCD-GPU-cuBLAS
was less efficient than the PCD-GPU implemented
by our own CUDA kernel functions. Therefore, we have omitted the  PCD-GPU-cuBLAS results.

\begin{table}[!htb]
\begin{centering} 
\caption{Average
computation time (in seconds),  number of iterations, 
and number of estimated edges for the AR(2) network.
Numbers within parentheses denote standard errors.} \label{tb:ar2} \medskip
\scalebox{0.75}{
\begin{tabular}{ccccccccccc}
\hline
\multirow{2}{*}{$\lambda$} & \multirow{2}{*}{$n$} & \multirow{2}{*}{$p$} & \multicolumn{4}{c}{Computation time (sec.)} & \multicolumn{2}{c}{Iteration} & \multicolumn{2}{c}{$|\hat{E}|$} \\ \cline{4-11} 
 &  &  & CD-BLAS & CD-NAIVE & PCD-CPU & PCD-GPU & CD & PCD & CD & PCD \\ \hline
\multirow{24}{*}{0.1} & \multirow{8}{*}{500} & \multirow{2}{*}{500} & 3.22 & 3.18 & 3.31 & 0.89 & 26.40 & 26.10 & 1976.70 & 1976.70 \\
 &  &  & (0.02) & (0.02) & (0.02) & (0.01) & (0.16) & (0.18) & (9.52) & (9.52) \\ \cline{3-11} 
 &  & \multirow{2}{*}{1000} & 13.09 & 26.28 & 28.66 & 4.87 & 26.90 & 26.80 & 5078.90 & 5078.90 \\
 &  &  & (0.09) & (0.17) & (0.18) & (0.04) & (0.18) & (0.20) & (14.19) & (14.19) \\ \cline{3-11} 
 &  & \multirow{2}{*}{2500} & 86.01 & 451.03 & 513.18 & 56.72 & 27.30 & 27.10 & 20327.30 & 20327.50 \\
 &  &  & (0.83) & (4.30) & (3.28) & (0.38) & (0.26) & (0.18) & (45.87) & (45.90) \\ \cline{3-11} 
 &  & \multirow{2}{*}{5000} & 378.04 & 3646.43 & 4149.79 & 404.75 & 27.70 & 27.30 & 64307.20 & 64307.40 \\
 &  &  & (2.94) & (19.49) & (51.93) & (2.29) & (0.15) & (0.15) & (69.26) & (69.13) \\ \cline{2-11} 
 & \multirow{8}{*}{1000} & \multirow{2}{*}{500} & 2.07 & 3.16 & 3.29 & 0.88 & 25.70 & 25.50 & 1407.20 & 1407.20 \\
 &  &  & (0.01) & (0.02) & (0.02) & (0.01) & (0.15) & (0.17) & (5.05) & (5.05) \\ \cline{3-11} 
 &  & \multirow{2}{*}{1000} & 25.90 & 25.84 & 28.27 & 4.76 & 26.20 & 26.10 & 2825.20 & 2825.20 \\
 &  &  & (0.14) & (0.13) & (0.22) & (0.03) & (0.13) & (0.18) & (4.50) & (4.50) \\ \cline{3-11} 
 &  & \multirow{2}{*}{2500} & 167.90 & 428.93 & 490.24 & 54.91 & 26.20 & 26.20 & 7216.80 & 7216.80 \\
 &  &  & (1.60) & (3.38) & (4.18) & (0.28) & (0.13) & (0.13) & (6.92) & (6.92) \\ \cline{3-11} 
 &  & \multirow{2}{*}{5000} & 694.62 & 3419.50 & 3862.95 & 385.66 & 26.20 & 26.00 & 14806.20 & 14806.20 \\
 &  &  & (3.87) & (17.43) & (17.96) & (0.05) & (0.13) & (0.00) & (14.28) & (14.28) \\ \cline{2-11} 
 & \multirow{8}{*}{2000} & \multirow{2}{*}{500} & 2.12 & 3.19 & 3.36 & 0.85 & 25.00 & 25.10 & 1393.10 & 1393.10 \\
 &  &  & (0.00) & (0.00) & (0.01) & (0.00) & (0.00) & (0.10) & (3.74) & (3.74) \\ \cline{3-11} 
 &  & \multirow{2}{*}{1000} & 21.81 & 25.76 & 27.88 & 4.65 & 25.70 & 25.40 & 2803.10 & 2803.10 \\
 &  &  & (0.39) & (0.16) & (0.26) & (0.03) & (0.15) & (0.16) & (4.70) & (4.70) \\ \cline{3-11} 
 &  & \multirow{2}{*}{2500} & 342.84 & 433.35 & 495.12 & 54.54 & 25.90 & 25.90 & 7006.90 & 7006.90 \\
 &  &  & (1.26) & (1.65) & (1.93) & (0.21) & (0.10) & (0.10) & (9.79) & (9.79) \\ \cline{3-11} 
 &  & \multirow{2}{*}{5000} & 1389.95 & 3440.95 & 3929.09 & 386.23 & 26.00 & 26.00 & 14009.30 & 14009.40 \\
 &  &  & (6.07) & (13.14) & (11.22) & (0.12) & (0.00) & (0.00) & (9.38) & (9.35) \\ \hline
\multirow{24}{*}{0.3} & \multirow{8}{*}{500} & \multirow{2}{*}{500} & 1.69 & 1.70 & 1.73 & 0.50 & 13.90 & 13.40 & 859.50 & 859.50 \\
 &  &  & (0.06) & (0.05) & (0.06) & (0.02) & (0.46) & (0.52) & (5.00) & (5.00) \\ \cline{3-11} 
 &  & \multirow{2}{*}{1000} & 7.00 & 14.19 & 14.80 & 2.55 & 14.40 & 13.70 & 1722.20 & 1722.20 \\
 &  &  & (0.13) & (0.26) & (0.45) & (0.07) & (0.27) & (0.40) & (5.87) & (5.87) \\ \cline{3-11} 
 &  & \multirow{2}{*}{2500} & 45.52 & 240.36 & 267.74 & 29.61 & 14.50 & 14.10 & 4297.80 & 4297.80 \\
 &  &  & (0.72) & (3.57) & (3.37) & (0.38) & (0.22) & (0.18) & (7.12) & (7.12) \\ \cline{3-11} 
 &  & \multirow{2}{*}{5000} & 210.20 & 1937.84 & 2289.35 & 215.59 & 14.80 & 14.50 & 8624.10 & 8624.10 \\
 &  &  & (4.05) & (38.32) & (23.56) & (2.47) & (0.29) & (0.17) & (17.08) & (17.08) \\ \cline{2-11} 
 & \multirow{8}{*}{1000} & \multirow{2}{*}{500} & 1.06 & 1.59 & 1.62 & 0.46 & 12.40 & 12.10 & 853.60 & 853.60 \\
 &  &  & (0.02) & (0.03) & (0.03) & (0.01) & (0.22) & (0.23) & (4.66) & (4.66) \\ \cline{3-11} 
 &  & \multirow{2}{*}{1000} & 12.06 & 12.31 & 13.05 & 2.22 & 12.20 & 11.80 & 1698.00 & 1698.00 \\
 &  &  & (0.13) & (0.13) & (0.15) & (0.02) & (0.13) & (0.13) & (5.80) & (5.80) \\ \cline{3-11} 
 &  & \multirow{2}{*}{2500} & 78.69 & 203.18 & 225.83 & 25.51 & 12.50 & 12.10 & 4268.10 & 4268.10 \\
 &  &  & (1.56) & (3.53) & (4.36) & (0.48) & (0.22) & (0.23) & (11.70) & (11.70) \\ \cline{3-11} 
 &  & \multirow{2}{*}{5000} & 350.79 & 1718.20 & 1901.72 & 182.99 & 13.00 & 12.30 & 8521.50 & 8521.50 \\
 &  &  & (7.29) & (35.40) & (39.33) & (3.18) & (0.30) & (0.21) & (11.65) & (11.65) \\ \cline{2-11} 
 & \multirow{8}{*}{2000} & \multirow{2}{*}{500} & 1.12 & 1.64 & 1.62 & 0.45 & 11.80 & 11.00 & 854.20 & 854.20 \\
 &  &  & (0.01) & (0.02) & (0.00) & (0.01) & (0.13) & (0.00) & (3.14) & (3.14) \\ \cline{3-11} 
 &  & \multirow{2}{*}{1000} & 10.81 & 12.22 & 12.76 & 2.11 & 11.60 & 11.10 & 1717.00 & 1717.00 \\
 &  &  & (0.21) & (0.16) & (0.11) & (0.02) & (0.16) & (0.10) & (4.96) & (4.96) \\ \cline{3-11} 
 &  & \multirow{2}{*}{2500} & 159.15 & 204.24 & 216.04 & 24.00 & 12.00 & 11.30 & 4291.10 & 4291.10 \\
 &  &  & (0.09) & (0.27) & (1.86) & (0.32) & (0.00) & (0.15) & (5.94) & (5.94) \\ \cline{3-11} 
 &  & \multirow{2}{*}{5000} & 637.67 & 1597.47 & 1796.61 & 171.11 & 12.10 & 11.50 & 8578.20 & 8578.30 \\
 &  &  & (5.89) & (15.28) & (32.73) & (2.46) & (0.10) & (0.17) & (14.48) & (14.51) \\ \hline
\end{tabular}
}
\par\end{centering}
\end{table}

\begin{table}[!htb]
\begin{centering} 
\caption{Average
computation time (in seconds),  number of iterations, 
and number of estimated edges for the scale-free network.
Numbers within parentheses denote standard errors.} \label{tb:sf} \medskip 
\scalebox{0.75}{
\begin{tabular}{ccccccccccc}
\hline
\multirow{2}{*}{$\lambda$} & \multirow{2}{*}{$n$} & \multirow{2}{*}{$p$} & \multicolumn{4}{c}{Computation time (sec.)} & \multicolumn{2}{c}{Iteration} & \multicolumn{2}{c}{$|\hat{E}|$} \\ \cline{4-11} 
 &  &  & CD-BLAS & CD-NAIVE & PCD-CPU & PCD-GPU & CD & PCD & CD & PCD \\ \hline
\multirow{24}{*}{0.1} & \multirow{8}{*}{500} & \multirow{2}{*}{500} & 1.33 & 1.35 & 1.53 & 0.46 & 11.30 & 12.20 & 2348.30 & 2348.30 \\
 &  &  & (0.02) & (0.02) & (0.04) & (0.01) & (0.15) & (0.29) & (13.88) & (13.88) \\ \cline{3-11} 
 &  & \multirow{2}{*}{1000} & 5.64 & 11.41 & 13.18 & 2.34 & 11.90 & 12.60 & 8090.90 & 8090.90 \\
 &  &  & (0.15) & (0.30) & (0.28) & (0.05) & (0.31) & (0.27) & (22.41) & (22.41) \\ \cline{3-11} 
 &  & \multirow{2}{*}{2500} & 43.59 & 228.30 & 269.74 & 30.84 & 14.20 & 14.60 & 42601.50 & 42601.50 \\
 &  &  & (1.59) & (8.18) & (9.14) & (1.05) & (0.51) & (0.50) & (34.34) & (34.34) \\ \cline{3-11} 
 &  & \multirow{2}{*}{5000} & 193.67 & 1872.72 & 2345.42 & 230.10 & 14.10 & 15.50 & 144508.00 & 144507.70 \\
 &  &  & (3.84) & (37.21) & (39.78) & (3.98) & (0.28) & (0.27) & (58.82) & (58.88) \\ \cline{2-11} 
 & \multirow{8}{*}{1000} & \multirow{2}{*}{500} & 0.96 & 1.44 & 1.61 & 0.48 & 11.60 & 12.50 & 598.20 & 598.20 \\
 &  &  & (0.02) & (0.03) & (0.03) & (0.01) & (0.27) & (0.27) & (5.46) & (5.46) \\ \cline{3-11} 
 &  & \multirow{2}{*}{1000} & 10.76 & 11.00 & 12.49 & 2.19 & 11.20 & 11.70 & 1340.00 & 1340.00 \\
 &  &  & (0.28) & (0.27) & (0.22) & (0.04) & (0.29) & (0.21) & (4.94) & (4.94) \\ \cline{3-11} 
 &  & \multirow{2}{*}{2500} & 92.85 & 235.54 & 269.49 & 29.01 & 13.90 & 13.70 & 4471.70 & 4471.70 \\
 &  &  & (2.22) & (5.80) & (3.32) & (0.32) & (0.31) & (0.15) & (18.16) & (18.16) \\ \cline{3-11} 
 &  & \multirow{2}{*}{5000} & 369.05 & 1829.25 & 2121.88 & 209.48 & 13.80 & 14.10 & 12557.50 & 12557.60 \\
 &  &  & (7.76) & (38.16) & (58.87) & (6.02) & (0.29) & (0.41) & (21.90) & (21.94) \\ \cline{2-11} 
 & \multirow{8}{*}{2000} & \multirow{2}{*}{500} & 1.09 & 1.60 & 1.77 & 0.49 & 11.90 & 12.80 & 506.50 & 506.50 \\
 &  &  & (0.01) & (0.02) & (0.02) & (0.01) & (0.18) & (0.20) & (0.50) & (0.50) \\ \cline{3-11} 
 &  & \multirow{2}{*}{1000} & 10.62 & 11.97 & 12.90 & 2.18 & 11.70 & 11.60 & 1011.00 & 1011.00 \\
 &  &  & (0.25) & (0.20) & (0.32) & (0.05) & (0.21) & (0.31) & (1.02) & (1.02) \\ \cline{3-11} 
 &  & \multirow{2}{*}{2500} & 187.80 & 239.21 & 257.65 & 28.62 & 14.50 & 13.50 & 2517.50 & 2517.50 \\
 &  &  & (5.47) & (6.88) & (4.20) & (0.47) & (0.43) & (0.22) & (1.56) & (1.56) \\ \cline{3-11} 
 &  & \multirow{2}{*}{5000} & 680.01 & 1705.53 & 2078.65 & 209.39 & 13.10 & 14.10 & 5045.90 & 5045.90 \\
 &  &  & (9.11) & (23.24) & (26.84) & (2.66) & (0.18) & (0.18) & (2.74) & (2.74) \\ \hline
\multirow{24}{*}{0.3} & \multirow{8}{*}{500} & \multirow{2}{*}{500} & 1.03 & 1.06 & 1.14 & 0.37 & 8.80 & 9.00 & 364.70 & 364.70 \\
 &  &  & (0.02) & (0.02) & (0.02) & (0.00) & (0.13) & (0.15) & (1.69) & (1.69) \\ \cline{3-11} 
 &  & \multirow{2}{*}{1000} & 4.43 & 9.06 & 10.14 & 1.80 & 9.40 & 9.60 & 713.30 & 713.30 \\
 &  &  & (0.08) & (0.16) & (0.22) & (0.04) & (0.16) & (0.22) & (2.13) & (2.13) \\ \cline{3-11} 
 &  & \multirow{2}{*}{2500} & 34.08 & 176.28 & 208.13 & 22.30 & 10.50 & 10.50 & 1755.40 & 1755.40 \\
 &  &  & (0.78) & (3.97) & (3.55) & (0.46) & (0.27) & (0.22) & (5.31) & (5.31) \\ \cline{3-11} 
 &  & \multirow{2}{*}{5000} & 138.34 & 1343.89 & 1568.08 & 157.45 & 10.40 & 10.60 & 3569.40 & 3569.40 \\
 &  &  & (2.87) & (27.90) & (43.39) & (4.52) & (0.22) & (0.31) & (6.12) & (6.12) \\ \cline{2-11} 
 & \multirow{8}{*}{1000} & \multirow{2}{*}{500} & 0.77 & 1.15 & 1.23 & 0.37 & 9.00 & 9.30 & 367.80 & 367.80 \\
 &  &  & (0.00) & (0.00) & (0.02) & (0.00) & (0.00) & (0.15) & (1.50) & (1.50) \\ \cline{3-11} 
 &  & \multirow{2}{*}{1000} & 8.65 & 8.94 & 9.75 & 1.71 & 9.00 & 9.00 & 715.00 & 715.00 \\
 &  &  & (0.15) & (0.14) & (0.16) & (0.03) & (0.15) & (0.15) & (2.09) & (2.09) \\ \cline{3-11} 
 &  & \multirow{2}{*}{2500} & 68.93 & 176.52 & 198.19 & 21.84 & 10.50 & 10.30 & 1758.80 & 1758.80 \\
 &  &  & (1.09) & (2.82) & (3.00) & (0.32) & (0.17) & (0.15) & (2.63) & (2.63) \\ \cline{3-11} 
 &  & \multirow{2}{*}{5000} & 267.18 & 1323.10 & 1530.87 & 150.17 & 9.90 & 10.10 & 3582.60 & 3582.60 \\
 &  &  & (2.69) & (13.37) & (15.77) & (1.48) & (0.10) & (0.10) & (3.96) & (3.96) \\ \cline{2-11} 
 & \multirow{8}{*}{2000} & \multirow{2}{*}{500} & 0.87 & 1.26 & 1.33 & 0.38 & 9.00 & 9.10 & 367.30 & 367.30 \\
 &  &  & (0.00) & (0.00) & (0.01) & (0.00) & (0.00) & (0.10) & (1.24) & (1.24) \\ \cline{3-11} 
 &  & \multirow{2}{*}{1000} & 8.21 & 9.53 & 10.43 & 1.75 & 9.10 & 9.20 & 712.70 & 712.70 \\
 &  &  & (0.07) & (0.09) & (0.14) & (0.02) & (0.10) & (0.13) & (1.73) & (1.73) \\ \cline{3-11} 
 &  & \multirow{2}{*}{2500} & 134.24 & 173.16 & 199.99 & 22.10 & 10.40 & 10.40 & 1760.00 & 1760.00 \\
 &  &  & (2.07) & (2.64) & (3.10) & (0.34) & (0.16) & (0.16) & (3.50) & (3.50) \\ \cline{3-11} 
 &  & \multirow{2}{*}{5000} & 513.17 & 1294.45 & 1482.85 & 148.71 & 9.90 & 10.00 & 3585.10 & 3585.10 \\
 &  &  & (5.14) & (13.08) & (1.64) & (0.00) & (0.10) & (0.00) & (4.13) & (4.13) \\ \hline
\end{tabular}
}
\par\end{centering}
\end{table}

\begin{table}[!htb]
\caption{Average
computation time in seconds (Comp. Time),  
and number of estimated edges ($|\hat{E}|$) for the AR(2) and scale-free networks
for PCD-CPU, PCD-GPU, GLASSO and FASTCLIME.
Numbers within parentheses denote standard errors.
} \label{tb:add} \medskip   
\begin{centering}
\scalebox{0.8}{
\begin{tabular}{|c|c|c|c|c|c|c|c|c|}
\hline
\multirow{2}{*}{Network} & \multirow{2}{*}{$p$} & \multirow{2}{*}{Model} & \multicolumn{3}{c|}{$n=500$} & \multicolumn{3}{c|}{$n=1000$} \\ \cline{4-9}
 &  &  &  $\lambda$ &  $|\hat{E}|$ & Comp. Time & $\lambda$ &  $|\hat{E}|$ & Comp. Time  \\ \hline
\multirow{16}{*}{AR(2)} & \multirow{8}{*}{$500$} & \multirow{2}{*}{PCD-CPU} & \multirow{2}{*}{0.3} & 859.5 & 1.73 & \multirow{2}{*}{0.3} & 1722.2 & 14.80 \\
&  &  &  & (5.00) & (0.06) &  & (5.87) & (0.45) \\ \cline{3-9}
&  & \multirow{2}{*}{PCD-GPU} & \multirow{2}{*}{0.3} & 859.5 & 0.50 & \multirow{2}{*}{0.3} & 1722.2 & 2.55 \\
&  &  &  & (5.00) & (0.02) &  & (5.87) & (0.07) \\ \cline{3-9}
&  & \multirow{2}{*}{GLASSO} & \multirow{2}{*}{0.383} & 860.4 & 0.97 & \multirow{2}{*}{0.386} & 1711.4 & 7.57  \\
&  &  &  & (3.50) & (0.00) &  & (6.36) & (0.02) \\ \cline{3-9}
&  & \multirow{2}{*}{FASTCLIME} &  \multirow{2}{*}{0.311} & 867.3 & 26.91  & \multirow{2}{*}{0.312} & 1701.2 & 198.88 \\
&  &  &  & (3.48) & (0.16) &  & (7.04) & (0.61) \\ \cline{2-9}

& \multirow{8}{*}{$1000$} & \multirow{2}{*}{PCD-CPU} & \multirow{2}{*}{0.3} & 853.6 & 1.62 & \multirow{2}{*}{0.3} & 1698.0 & 13.05 \\
&  &  &  & (4.66) & (0.03) &  & (5.80) & (0.15) \\ \cline{3-9}
&  & \multirow{2}{*}{PCD-GPU} & \multirow{2}{*}{0.3} & 853.6 & 0.46 & \multirow{2}{*}{0.3} & 1698.0 & 2.22 \\
&  &  &  & (4.66) & (0.01) &  & (5.80) & (0.02) \\ \cline{3-9}
&  & \multirow{2}{*}{GLASSO} & \multirow{2}{*}{0.384} & 861.5 & 1.03 & \multirow{2}{*}{0.388} & 1699.6 & 7.85 \\
&  &  &  & (3.54) & (0.00) &  & (5.16) & (0.03) \\ \cline{3-9}
&  & \multirow{2}{*}{FASTCLIME} &  \multirow{2}{*}{0.311} & 854.0 & 27.25 & \multirow{2}{*}{0.315} & 1698.4 & 199.42 \\
&  &  &  & (5.06) & (0.11) &  & (5.35) & (0.51) \\ \hline

\multirow{16}{*}{Scale-free} & \multirow{8}{*}{$500$} & \multirow{2}{*}{PCD-CPU} & \multirow{2}{*}{0.3} & 364.7 & 1.14 & \multirow{2}{*}{0.3} & 713.3 & 10.14 \\
&  &  &  & (1.69) & (0.02) &  & (2.13) & (0.22) \\ \cline{3-9}
&  & \multirow{2}{*}{PCD-GPU} & \multirow{2}{*}{0.3} & 364.7 & 0.37 & \multirow{2}{*}{0.3} & 713.3 & 1.80 \\
&  &  &  & (1.69) & (0.00) &  & (2.13) & (0.04) \\ \cline{3-9}
&  & \multirow{2}{*}{GLASSO} & \multirow{2}{*}{0.241} & 365.5 & 0.13 & \multirow{2}{*}{0.249} & 719.6 & 0.79 \\
&  &  &  & (2.23) & (0.00) &  & (1.86) & (0.00) \\ \cline{3-9}
&  & \multirow{2}{*}{FASTCLIME} &  \multirow{2}{*}{0.236} & 364.7 & 27.29 & \multirow{2}{*}{0.237} & 717.8 & 192.89 \\
&  &  &  & (1.56) & (0.07) &  & (1.81) & (0.18) \\ \cline{2-9}

& \multirow{8}{*}{$1000$} & \multirow{2}{*}{PCD-CPU} & \multirow{2}{*}{0.3} & 367.8 & 1.23 & \multirow{2}{*}{0.3} & 715.0 & 9.75 \\
&  &  &  & (1.50) & (0.02) &  & (2.09) & (0.16) \\ \cline{3-9}
&  & \multirow{2}{*}{PCD-GPU} & \multirow{2}{*}{0.3} & 367.8 & 0.37 & \multirow{2}{*}{0.3} & 715.0 & 1.71 \\
&  &  &  & (1.50) & (0.00) &  & (2.09) & (0.03) \\ \cline{3-9}
&  & \multirow{2}{*}{GLASSO} & \multirow{2}{*}{0.244} & 368.5 & 0.19 & \multirow{2}{*}{0.249} & 707.6 & 1.01 \\
&  &  &  & (1.67) & (0.00) &  & (2.50) & (0.00) \\ \cline{3-9}
&  & \multirow{2}{*}{FASTCLIME} &  \multirow{2}{*}{0.235} & 366.6 & 27.48 & \multirow{2}{*}{0.237} & 712.6 & 184.79 \\
&  &  &  & (1.71) & (0.11) &  & (2.08) & (1.32) \\ \hline

\end{tabular}%
}
\par\end{centering}
\end{table}


\begin{figure}[!h]
\begin{center}
  \begin{minipage}[b]{.45\linewidth}
  \centering   \centerline{\includegraphics[scale=0.4]{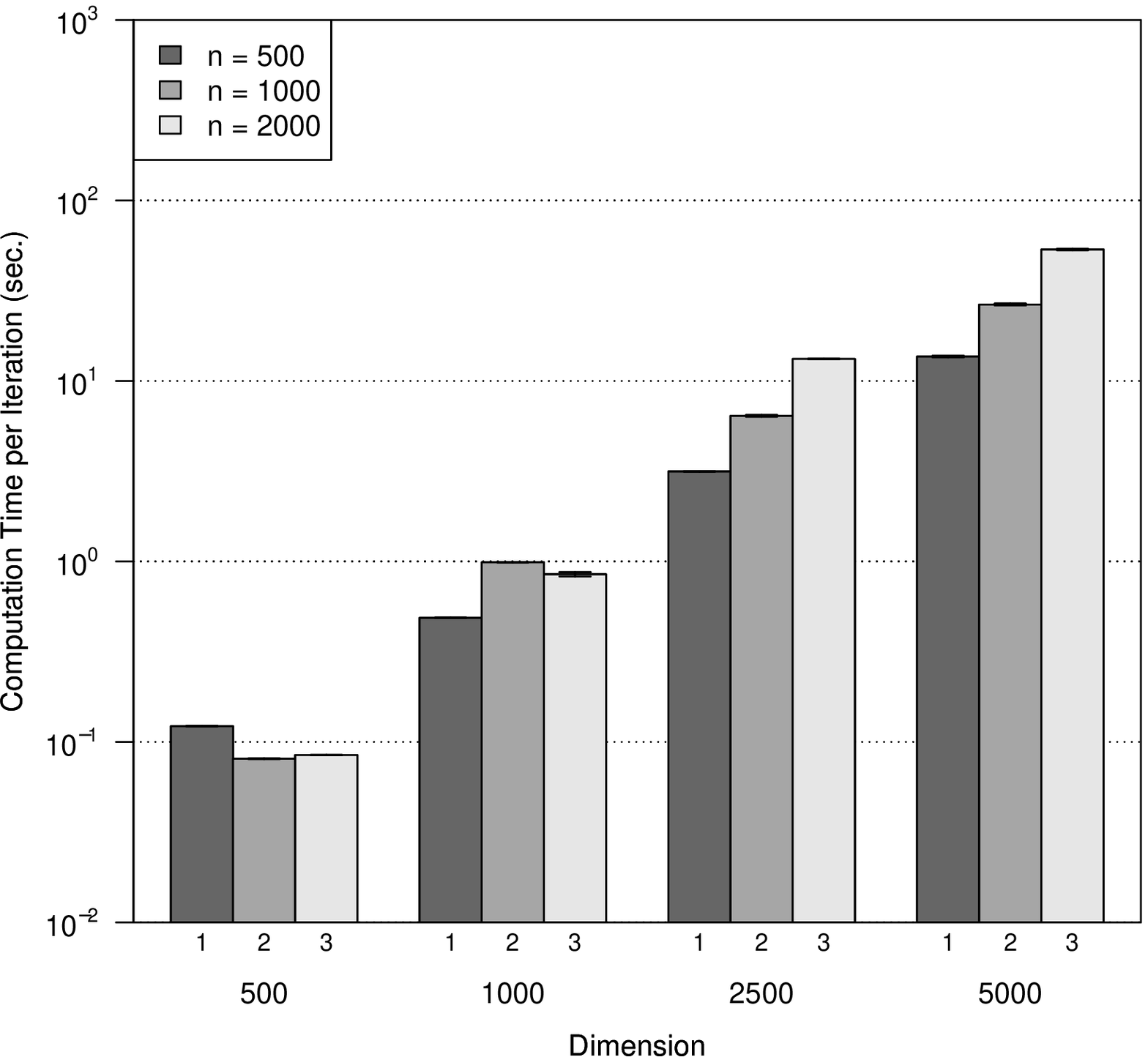}}
  \centerline{(a) CD-BLAS, $\lambda=0.1$ }
\end{minipage}
 \begin{minipage}[b]{.45\linewidth}
  \centering   \centerline{\includegraphics[scale=0.4]{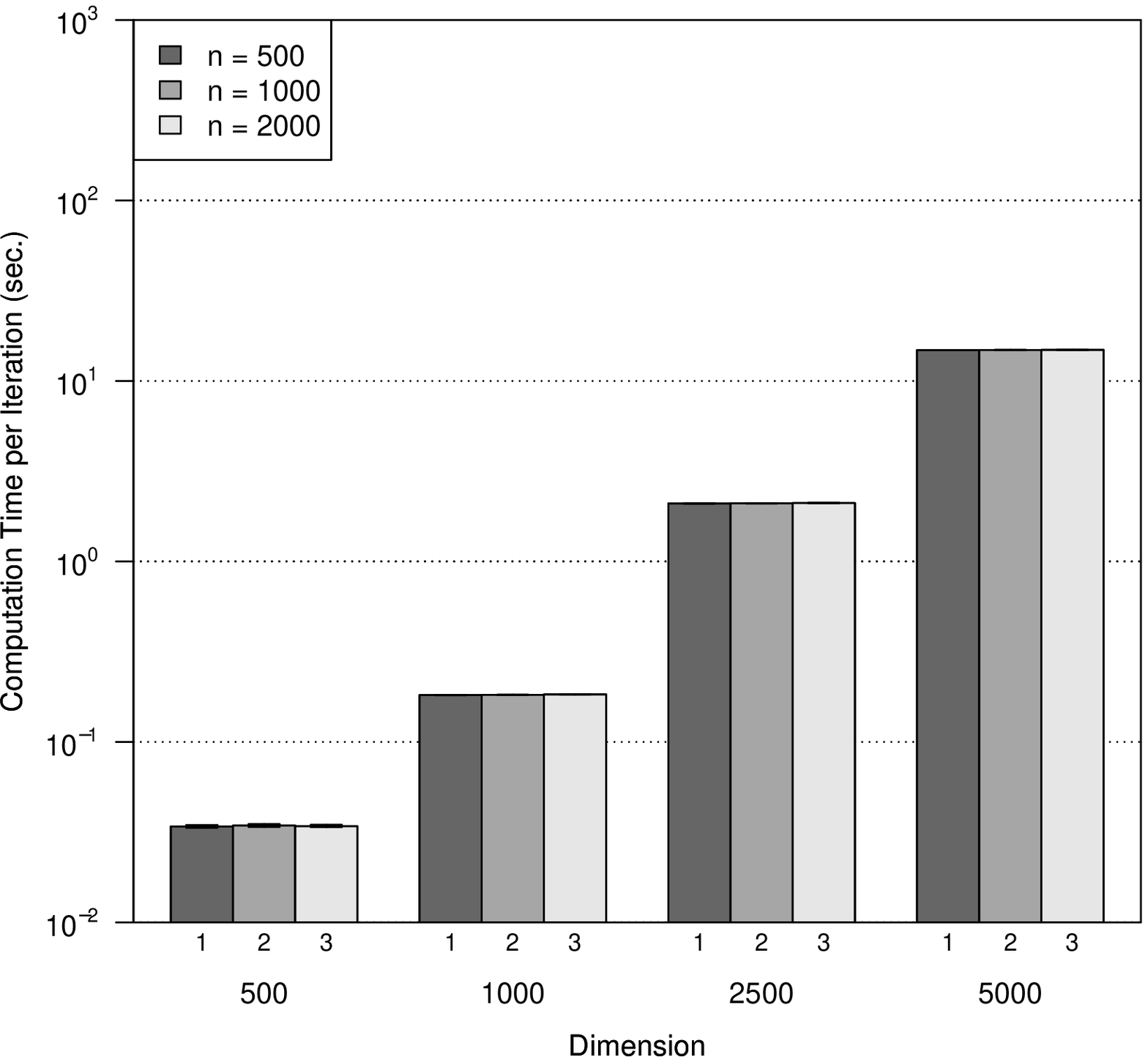}}
  \centerline{(b) PCD-GPU, $\lambda=0.1$ }
 \end{minipage}.
\begin{minipage}[b]{.45\linewidth}
  \centering   \centerline{\includegraphics[scale=0.4]{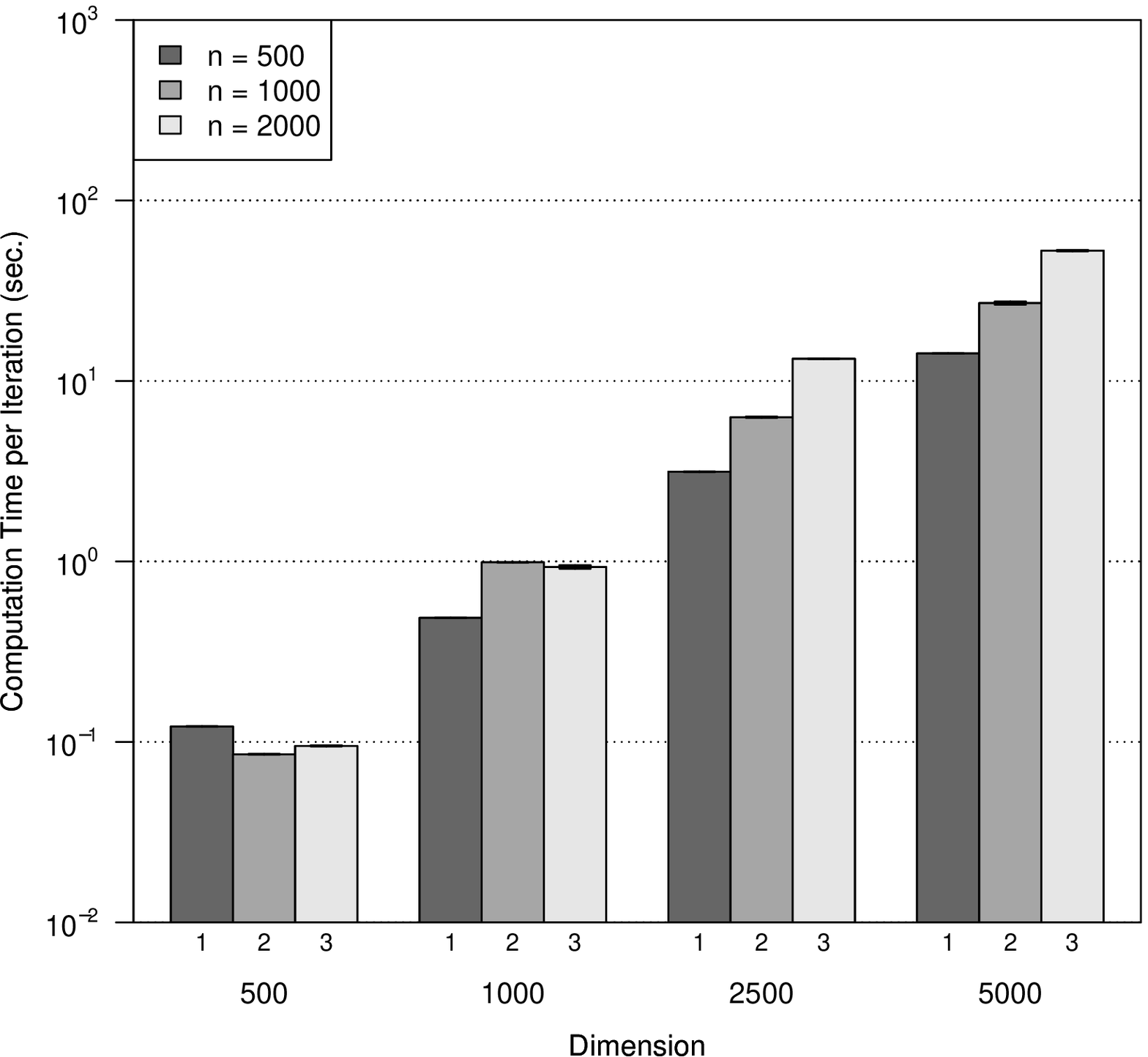}}
   \centerline{(c) CD-BLAS, $\lambda=0.3$ }
 \end{minipage}
\begin{minipage}[b]{.45\linewidth}
  \centering   \centerline{\includegraphics[scale=0.4]{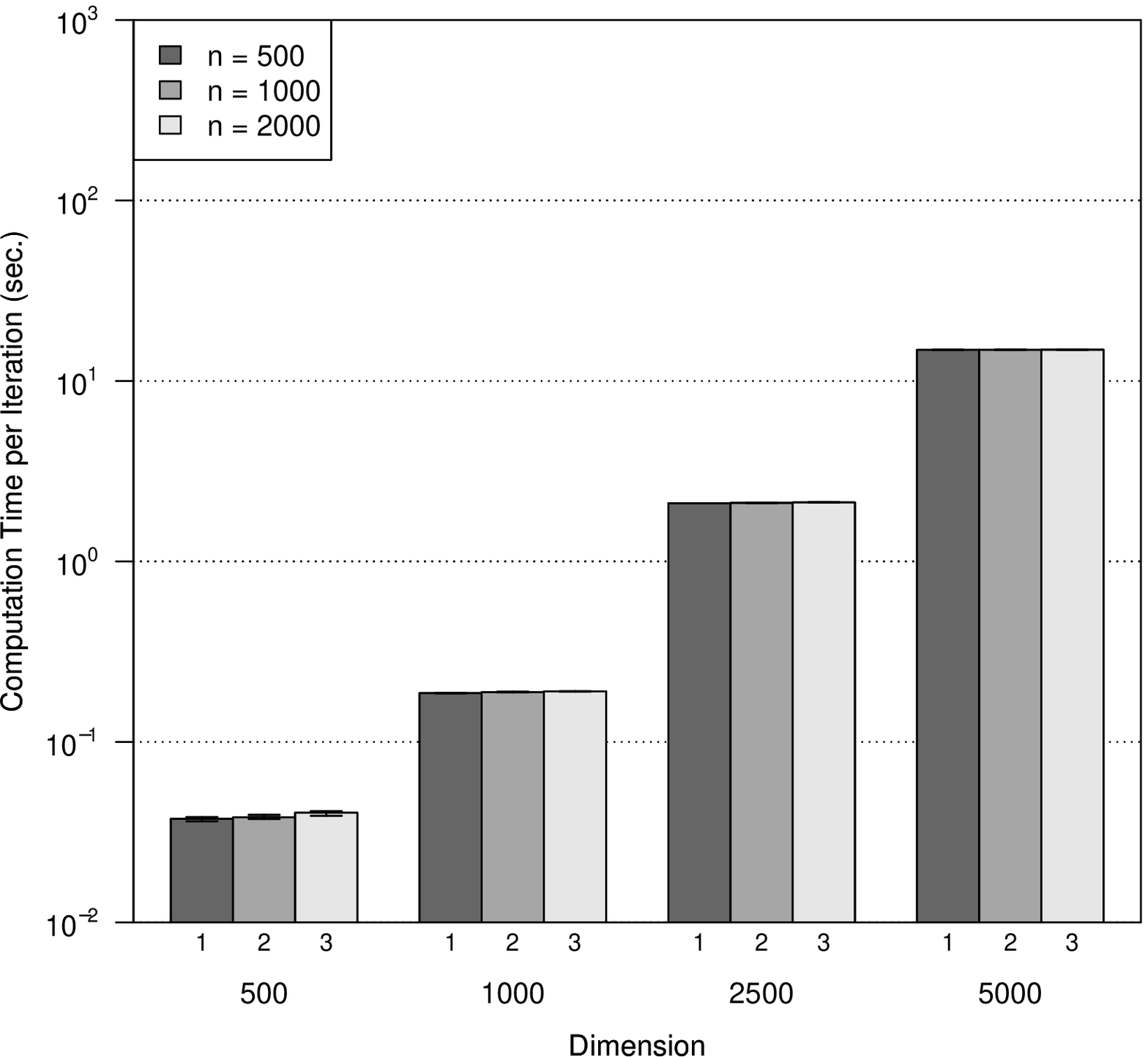}}
  \centerline{(d) PCD-GPU, $\lambda=0.3$ }
\end{minipage}
\caption{Average computation time per iteration for the AR(2) network. The vertical lines denote 95$\%$ confidence intervals of the mean computation time per iteration.} \label{fig:ar2}
\end{center}
\end{figure}

\begin{figure}[!h]
\begin{center}
  \begin{minipage}[b]{.45\linewidth}
  \centering   \centerline{\includegraphics[scale=0.4]{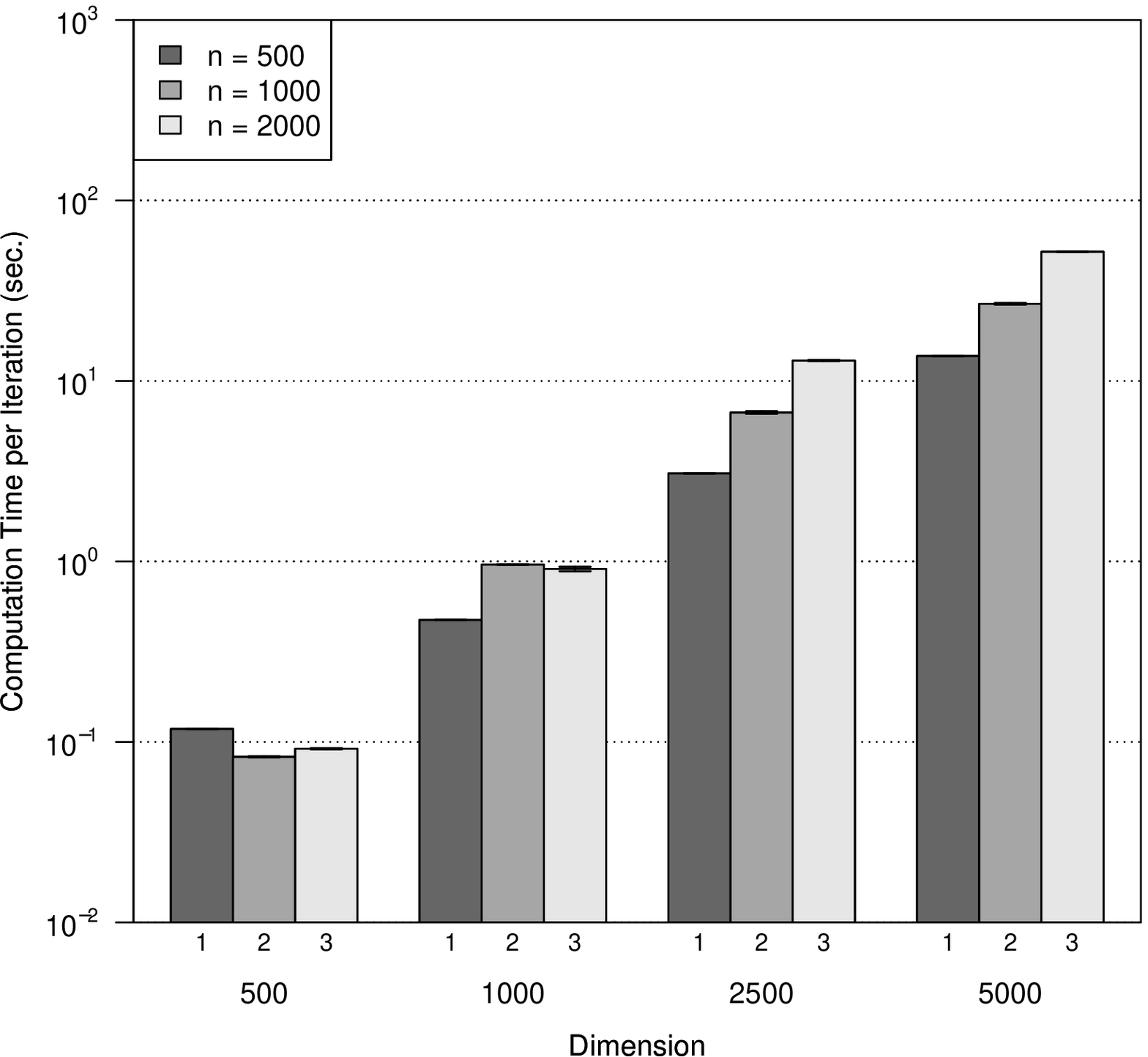}}
  \centerline{(a) CD-BLAS, $\lambda=0.1$ }
\end{minipage}
 \begin{minipage}[b]{.45\linewidth}
  \centering   \centerline{\includegraphics[scale=0.4]{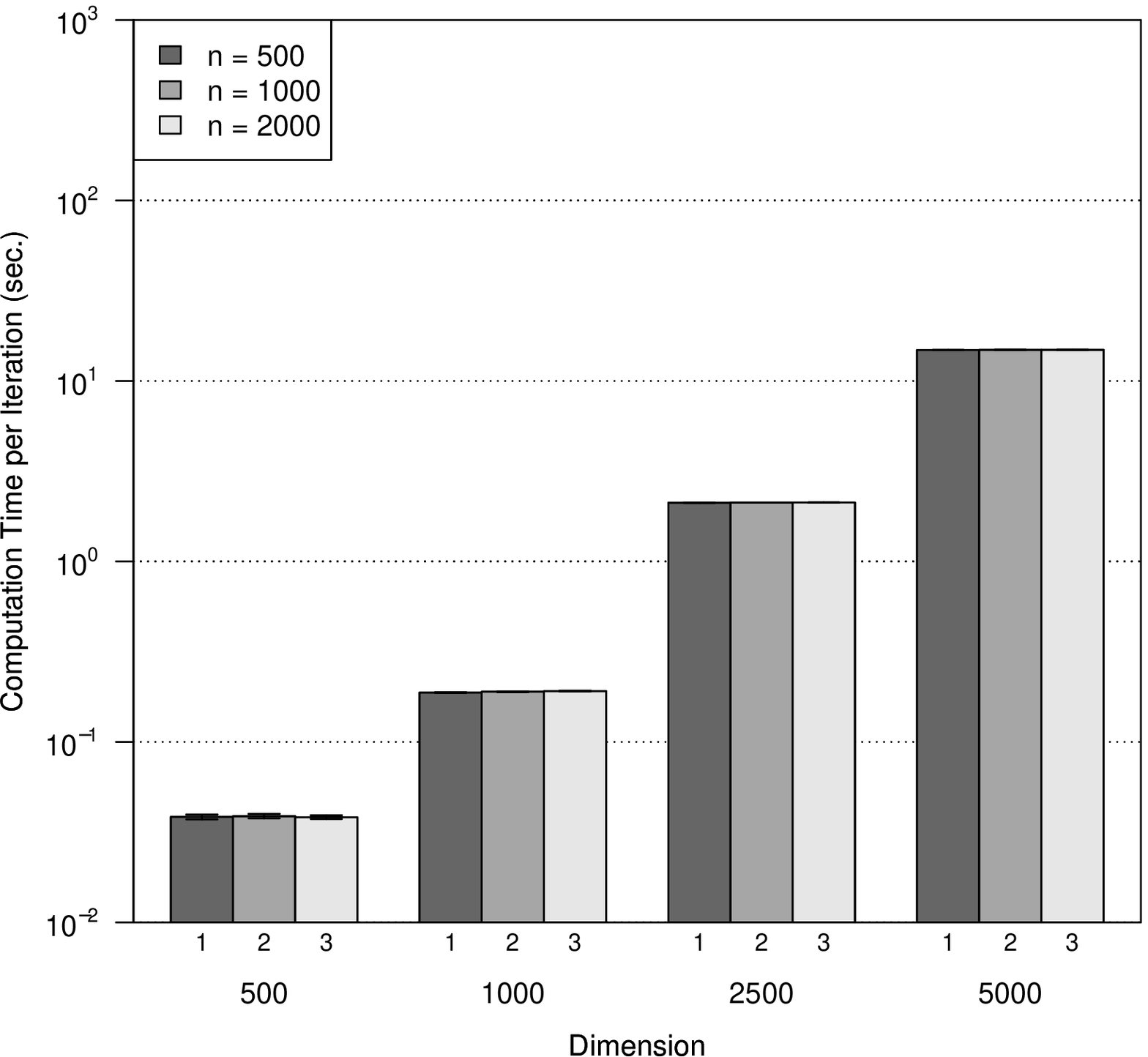}}
  \centerline{(b) PCD-GPU, $\lambda=0.1$ }
 \end{minipage}.
\begin{minipage}[b]{.45\linewidth}
  \centering   \centerline{\includegraphics[scale=0.4]{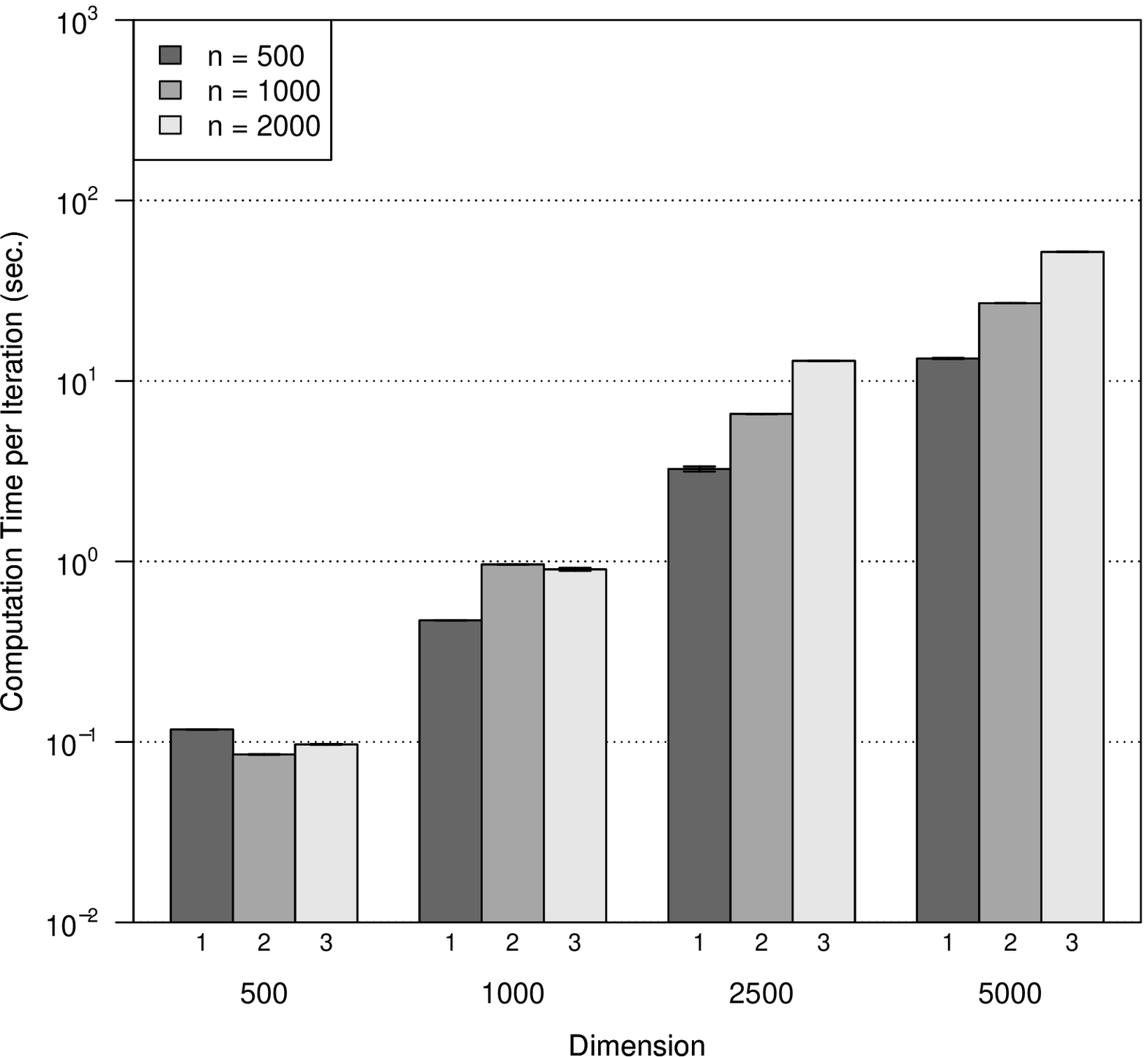}}
   \centerline{(c) CD-BLAS, $\lambda=0.3$ }
 \end{minipage}
\begin{minipage}[b]{.45\linewidth}
  \centering   \centerline{\includegraphics[scale=0.4]{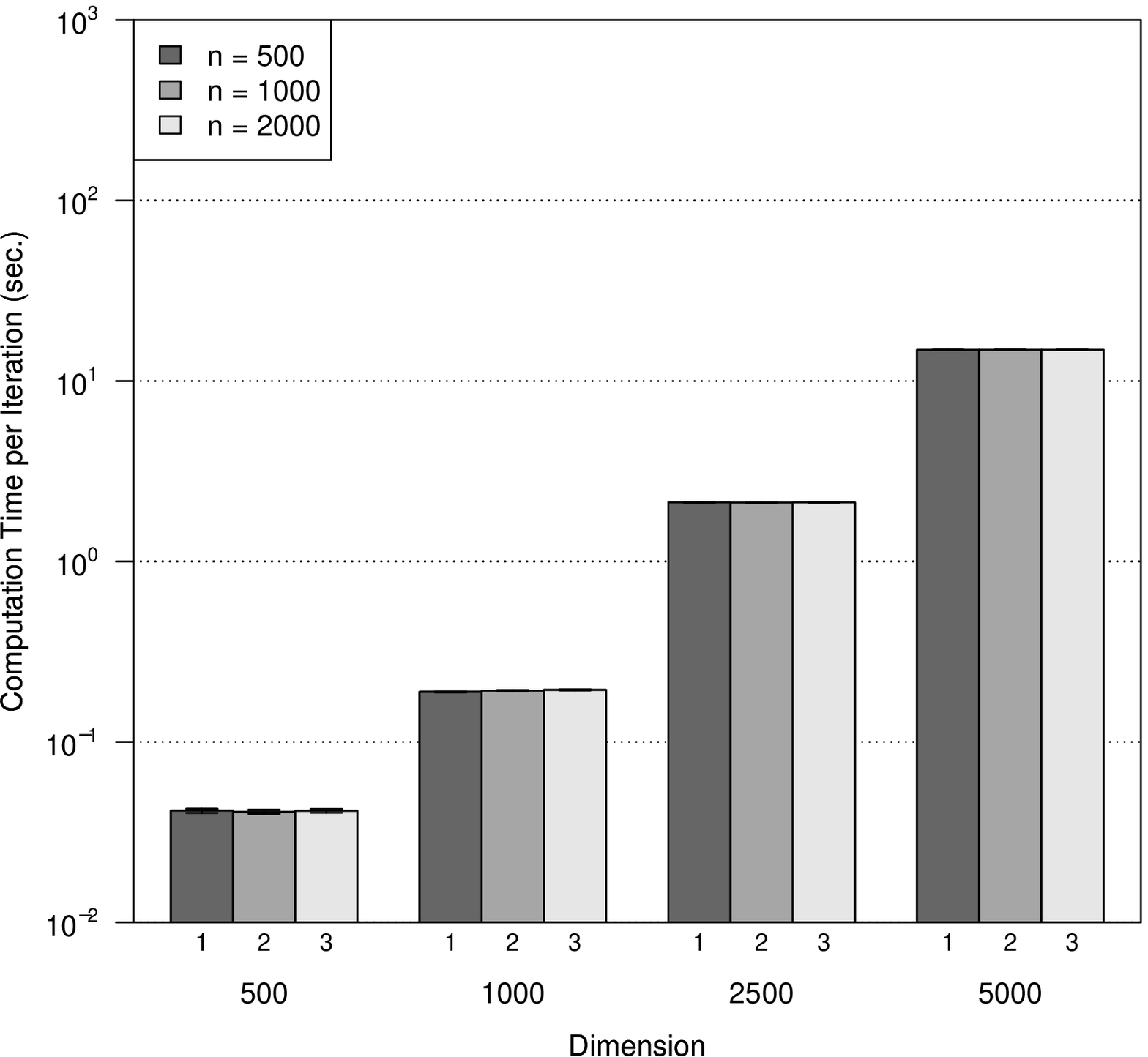}}
  \centerline{(d) PCD-GPU, $\lambda=0.3$ }
\end{minipage}
\caption{Average computation time per iteration for the scale-free network. The vertical lines denote 95$\%$ confidence intervals of the mean computation time per iteration.} \label{fig:sf}
\end{center}
\end{figure}

\section{Concluding remarks}\label{sec:final}

In this paper, we proposed the parallel coordinate
descent algorithm for CONCORD, which simultaneously updates $p_{even}/2$ elements,
which is $p/2$ for an even $p$ and
$(p-1)/2$ for an odd $p$.
We also showed, by applying the theoretical results to edge coloring,  that $p_{even}/2$
is the maximum number of simultaneously
updatable off-diagonal elements in the CONCORD-CD algorithm.
Comprehensive numerical studies
show that the proposed CONCORD-PCD algorithm
is adequate for GPU-parallel computation,
and more efficient than the original CONCORD-CD algorithm,
for large datasets.

We conclude the paper with discussion about possible extensions.
Our idea of parallelized coordinate descent 
can be applied to modeling gene regulatory networks from 
heterogeneous data through joint estimation of sparse precision matrices
\citep{Danaher2014}.
For example, let us consider
the following objective function, which estimates two precision matrices, $\Omega_1=(\omega_{ij}^{(1)})$
and $\Omega_2=(\omega_{ij}^{(2)})$,
under the constraint that
both matrices are sparse and 
only slightly different from each other:
\begin{equation} \nonumber
\begin{array}{rcl} \displaystyle
L_{joint}(\Omega_1, \Omega_2; \lambda_1, \lambda_2) &=& \displaystyle
 \sum_{m=1}^2 \Big\{-\sum_{i=1}^p n\log \omega_{ii}^{(m)} + \frac{1}{2}\sum_{i=1}^p \sum_{k=1}^n \Big( \omega_{ii}^{(m)} X_{ki}^m + \sum_{j\neq i} \omega_{ij}^{(m)}  X_{kj}^m \Big)^2 \Big\} \\
 && \displaystyle
 + \lambda_1 \sum_{m=1}^2
\sum_{i<j} |\omega_{ij}^{(m)}|
+\lambda_2 \sum_{i \leq j} |\omega_{ij}^{(1)}
- \omega_{ij}^{(2)}|,
\end{array}
\end{equation}
where $X_{ki}^m$ is the $(k,i)$th element
of the observed dataset from $m$th population ($m=1,2$).
Consider a block coordinate descent algorithm that minimizes along  $(\omega_{ij}^{(1)}, \omega_{ij}^{(2)})$
for each update, in which the update formula has a closed-form expression 
similar to one in  \cite{Yu2018}.
One can show that if two edge indices $ij$ and $i'j'$ are disjoint,
then the update formula for $(\hat{\omega}_{ij}^{(1)}, \hat{\omega}_{ij}^{(2)})$ 
does not involve $(\hat{\omega}_{i'j'}^{(1)}, \hat{\omega}_{i'j'}^{(2)})$. Thus,
one can develop a parallelization for this algorithm as presented in this paper.

\begin{acknowledgements}

This research was supported by the National Research Foundation of Korea (NRF-2018R1C1B6001108), Inha University Research Grant, and Sookmyung Women's University Research Grant (No. 1-2003-2004).

\end{acknowledgements}

%
%

\clearpage

\bibliographystyle{spbasic}      
\bibliography{ref.bib}

\end{document}